\newcommand{\syn}{\textbf{\textsc{Syn}}}
\newcommand{\news}{\textbf{\textsc{News}}}
\newcommand{\cifar}{\textbf{\textsc{CIFAR}}}
\newcommand{\rg}{\textbf{\textsc{Greedy}}}
\newcommand{\cg}{\textbf{\textsc{ConGreedy}}}
\newcommand{\algtwo}{\textbf{\textsc{LPRMono}}}
\newcommand{\algone}{\textbf{\textsc{LPR}}}
\theoremstyle{plain}
\newtheorem{theorem}{Theorem}[section]
\newtheorem{proposition}[theorem]{Proposition}
\newtheorem{lemma}[theorem]{Lemma}
\newtheorem{fact}[theorem]{Fact}
\theoremstyle{definition}
\newtheorem{definition}[theorem]{Definition}
\theoremstyle{remark}
\newtheorem{example}{Example}
\newtheorem{claim}{Claim}
\DeclareMathOperator*{\argmax}{argmax}
\newcommand{\cov}{\mathcal{C}}
\newcommand{\bcov}{\mathnormal{\bar{\cov}}}
\newcommand{\bX}{\mathnormal{\bar{X}}}
\newcommand{\bC}{\mathnormal{\bar{C}}}
\newcommand{\pval}{\mathnormal{\rho}}
\newcommand{\bidset}{\mathnormal{\Lambda}}
\newcommand{\bidprof}{\mathnormal{\boldsymbol{b}}}
\newcommand{\priceprof}{\mathnormal{\boldsymbol{p}}}
\newcommand{\bundle}{\mathcal{D}}
\newcommand{\rad}{\mathnormal{r}}
\newcommand{\welf}{\mathcal{W}}
\newcommand{\db}{\mathcal{DB}}
\DeclareMathOperator{\E}{\mathbb{E}}
\DeclareMathOperator{\Prob}{\mathbb{P}}
\newcommand{\norm}[1]{\left\lVert#1\right\rVert}
\newcommand{\optlp}{\text{OPT}_{\text{LP}}}
\begin{document}

\title{Data Auctions for Retrieval Augmented Generation}
\date{}  

\newcommand*{\affaddr}[1]{#1}
\newcommand*{\affmark}[1][*]{\textsuperscript{\normalfont#1}}
\newcommand*\samethanks[1][\value{footnote}]{\footnotemark[#1]}

\author{
  Minbiao Han\affmark[1]\thanks{These authors contributed equally to this work.}~,  Seyed A. Esmaeili\affmark[1]\samethanks~, Michael Albert\affmark[2]~,  Haifeng Xu\affmark[1]\\
  \affaddr{\affmark[1] University of Chicago}\\
  \affaddr{\affmark[2] University of Virginia}\\
}

\date{}  
\maketitle

\begin{abstract}
  We study the problem of data selling for Retrieval Augmented Generation (RAG) tasks in Generative AI applications. We model each buyer's valuation of a dataset with a natural coverage-based valuation function that increases with the inclusion of more relevant data points that would enhance responses to anticipated queries. Motivated by issues such as data control and prior-free revenue maximization, we focus on the scenario where each data point can be allocated to only one buyer. We show that the problem of welfare maximization in this setting is NP-hard even with two bidders, but  design a polynomial-time $(1-1/e)$ approximation algorithm for any number of bidders. Unfortunately, however, this efficient allocation algorithm fails to be incentive compatible. The crux of our approach is   a carefully tailored  post-processing step called \emph{data burning} which retains the $(1-1/e)$ approximation factor but achieves incentive compatibility. Our thorough experiments on synthetic and real-world image and text datasets  demonstrate the practical effectiveness of our algorithm compared to popular baseline algorithms for combinatorial auctions. 
\end{abstract}

\section{Introduction}
Generative AI (GenAI) 
has shown impressive performance in various domains such as text,  code, image, and even video generation \cite{ho2020denoising, ho2022video,touvron2023llama,achiam2023gpt,jiang2024survey}. The impressive performance of GenAI has had profound impacts on numerous applications. Perhaps the most important and salient application of GenAI is its usage as a replacement for search engines and question-answer websites. In particular, it has become widespread to look up information using   GenAI platforms such as ChatGPT, Perplexity, and Claude;  traditional websites such as the Google and Bing have also developed similar   ``AI Overview'' functionality.  The answers provided by GenAI models are rich, detailed, and summarize information across various webpages and sources. 

Although GenAI models have shown promise as a tool for acquiring knowledge, they are known to \emph{hallucinate}, i.e., output invalid or even fully nonfactual answers \cite{huang2023survey}. 
Retrieval Augmented Generation (RAG) \cite{lewis2020retrieval,asai2023retrieval,gao2023retrieval}  has emerged as a very useful method that not only gives the user 
confidence in the GenAI answer but also enables the user to look up the exact source of a specific piece of information in the answer. In particular, by using RAG the output of the GenAI includes \emph{references} to the sources of the information.

Another important usage of RAG is to incorporate information into a response that was not present during training.
For example, common GenAI chatbots can converse intelligently about current events even though the events occurred after the training data cutoff date.
This is due to news articles being incorporated into the context window through RAG.
Many news organizations (The Associated Press, News Corp, and The Financial Times to give a few examples) have entered into licensing agreements with GenAI providers to allow for their news articles to be used in this way.

In this paper, we consider a setting where a collection of GenAI platforms are interested in buying information from a seller who possesses a large \emph{database} of information for RAG based usage. These GenAI platforms  anticipate  queries about certain topics, are interested in augmenting their generated answers with the information from the database coupled with references. The seller looks to auctions off the database and can allocate a dataset from the database (i.e., a subset of data points)  to each buyer and assign a price for each dataset. We focus on the more challenging setting where the seller does \emph{exclusive} allocation (i.e.,   each data point can be allocated to at most one buyer).  
Such exclusive data selling could arise from multiple reasons. 
First, for sensitive data, selling each data point exclusively will reduce the likelihood that the data is leaked by the buyer, since the seller can easily trace the source of the leakage to the buyer.
Second, in many cases buyers would want exclusive allocations  since it  gives them a competitive advantage, leading to higher valuations for the dataset and increased revenue for the seller as noted in \cite{agarwal2024towards}. 

Motivated by the above questions, we solve the mechanism design problem of allocating datasets for RAG based applications with private buyer preferences. Our contributions are summarized as follows. Due to space constraints, all proofs of theoretical results are provided in the appendix.
\begin{enumerate}[leftmargin=*]
    \item \textbf{Model and Valuation Function:}  Section \ref{sec:model} introduces the formal model of our problem including the mechanism design objectives and a new RAG-based coverage function that naturally captures a buyer's value for any subset of data points and turns out to satisfy submodularity. Interestingly, we further show that in our setting, revenue maximization can be reduced to welfare maximization.   
  
    \item \textbf{Computational Hardness and Efficient Approximation:}  Section \ref{sec:new_algorithmic}  analyzes the computational complexity of our problem. We show that the demand oracle  (see Definition \ref{def:demand_oracle}), which is commonly used in combinatorial auctions, is NP-hard to implement in our setting. Moreover,  welfare maximization is NP-hard even for special cases such as two bidders (though trivial for one bidder). On the positive side,   we introduce a  $(1-1/e)$-approximate algorithm for general welfare maximization. Unlike many previous submodular welfare maximization algorithms, our algorithm is practical, and involves a simple linear program and a rounding step. 
    
    \item \textbf{Incentive Compatible Welfare Maximizing Mechanism:}  Section \ref{sec:new_ic_theory} accounts for the issue of incentives. We start by showing an interesting generalization of the well-known Myerson lemma \cite{myerson1981optimal}  to our setting with combinatorial allocations. In particular, given any allocation satisfying a certain monotonic property, we can find a payment scheme that induces an incentive compatible mechanism, i.e., truthful bidding is optimal in expectation for each bidder.   
    However, monotonicity turns out to be a quite delicate property -- we find that almost all natural submodular maximization algorithms turn out to \emph{not} be monotonic, and neither is our approximate algorithm from Section \ref{sec:new_algorithmic}. However, we develop a novel post-processing technique for the output of our previous allocation algorithm that leads to monotonic allocation and meanwhile preserves the  $1-1/e$ approximation. This helps us to achieve incentive compatibility at no additional performance cost to our original approximation algorithm. We also discuss potential further applications of our techniques to related problems.  
    \item \textbf{Experiments:}  Section \ref{sec:experiment} tests our algorithm  on real-world datasets and demonstrates its superior performance over popular baselines in terms of both revenue quality and incentive compatibility.   
\end{enumerate}

\subsection{Additional Discussion on Related Work}
Our work is closely related to  research on AI and computational economics, as detailed below.

\textbf{Mechanism design for generative AI.} Conceptually, our work contributes to the recent literature on mechanism design for generative AI. While a significant body of work has explored the use of auctions to shape  LLM outputs and integrate ads \cite{duetting2024mechanism,dubey2024auctions,hajiaghayi2024ad,soumalias2024truthful,sun2024mechanism}, our approach takes a different direction. We investigate methods for selling data to accomplish generative AI tasks using LLMs, offering a novel perspective on the intersection of economics and AI.

\textbf{Data markets.} Our work also intersects with the field of data market design, which encompasses data valuation and data selling. Previous work has introduced various ML markets \cite{agarwal2019marketplace,chen2019towards,liu2021dealer,han2023data}, primarily focusing on selling the instrumental value of data by creating ML models from aggregated market data and offering different versions to buyers based on their preferences. While \citet{mehta2021sell} explored a similar data selling problem as our paper, their focus was on pricing conventional data types, such as phone lists. Additionally, their model assumes that one buyer's purchase decision does not impact the utilities or decisions of other buyers, which differs from our exclusive dataset selling approach. In a related but distinct area, there's a substantial body of research on data valuation using the concept of Data Shapley \cite{ghorbani2019data,jia2019towards,tang2021data,schoch2022cs,wang2024efficient}. This method assesses the value of individual data points or groups by measuring their impact on model performance. Our approach diverges from this by modeling data value independently of the specific model used.

\textbf{Combinatorial auctions.} Our data auction problem also shares notable similarities with traditional combinatorial auction problems. A notable related work by \citet{dobzinski2005approximation} presents an incentive compatible mechanism with a $1/\sqrt{n}$ approximation algorithm for general submodular value functions. \citet{dughmi2011convex}   proposed a mechanism that is truthful-in-expectation and achieves a $(1 - 1/e)$-approximation. However, their algorithm only attains an expected polynomial runtime and requires a stronger ``lottery-value oracle.'' This oracle can be approximated using a traditional value oracle but would need infinitely many value queries to achieve an exact truthful mechanism due to sampling errors. Other relevant studies in this area have limitations such as focusing on data auctions with single-minded buyers \cite{lehmann2002truth,dobzinski2007mechanisms,mu2008truthful}, requiring strong demand oracles for buyer values \cite{bartal2003incentive,feige2006maximizing,lavi2011truthful,dobzinski2006truthful,assadi2020improved}, or only studying the allocation problem under the assumption that all buyers' value information is public \cite{dobzinski2005approximation,dobzinski2006improved,vondrak2008optimal,lehmann2001combinatorial}. Our research investigates data auctions with a specific buyer value function that is submodular. We introduce an approximation algorithm that is truthful and runs in polynomial time. Our approach significantly simplifies \citet{dughmi2011convex}'s algorithm, which faces technical difficulties related to the solvability of convex programs.



\section{Model and Problem Statement}\label{sec:model}
Here we introduce our model, notation, and the formal statement of our problem. The seller owns a database of points $\db$ that consists of a collection of $n$ many points, i.e, $\db=\{x_1,x_2,\dots,x_n\}$. The points in the database could represent images, documents, or even articles that contain both textual and visual information that can be used for RAG. Our framework is general and agnostic to the nature of the points only requiring a measure of distance $d(.,.)$ or similarity $s(.,.)$ that can be calculated given any pair of points $x_a$ and $x_b$. The choice for the distance/similarity measure is largely dependent on the setting, but there are many well-known standard measures that can be used. For example, if the database $\db$ consists of a collection of images then a simple and meaningful measure would be the Euclidean distance between the embeddings of the images, i.e., $d(x_a,x_b)=\norm{f(x_a)-f(x_b)}_2$ where $f(.)$ is some embedding function \cite{dai2020convolutional,jeon2020acoustic}. If we had a collection of documents on the other hand, then we may use the cosine similarity between the term frequency vectors of the points $x_a$ and $x_b$, i.e., $s(x_a,x_b)=\frac{\vv{t_a} \ \cdot \ \vv{t_b}}{\norm{\vv{t_a}} \norm{\vv{t_b}}}$ where $\vv{t_a}$ and $\vv{t_b}$ are the term frequency vectors of documents $x_a$ and $x_b$, respectively \cite{harris1954distributional}. For consistency, we assume that we have a distance $d(.,.)$ instead of a similarity measure $s(.,.)$, although our framework, theoretical results, and algorithms hold for both settings. 

\textbf{Coverage-Based Value Function:} There are $m$ buyers (bidders)\footnote{Since the buyers will submit bids, we will use the words buyers and bidders interchangeably.} who are interested in (parts of) the database $\db$ and want to use it in particular for applications such a RAG or standard information retrieval. It is critical to model a value function that accurately represents the value a bidder would extract from a subset of points in the database without requiring the training of a large machine learning model as the training process itself can be quite expensive.\footnote{The cost of training large language models (LLMs) has surged in recent years, with expenses now reaching into the millions of dollars for state-of-the-art models \cite{cottier2024rising}.} Since we are motivated by RAG-based applications our value function is focused on measuring the approximate \emph{coverage} a bidder obtains from a given subset of points (dataset) from the database. To fully characterize the coverage function, we need to introduce some notation. First, consider a bidder $i$ and dataset of points $\bundle \subset \db$ that has been given to him. 
The absence of a point $j$ from $\bundle$ does not imply that it is not covered by the dataset $\bundle$ since there may exist a point $j' \in \bundle$ that is \emph{sufficiently close} to $j$ and can practically serve as a representative of $j$. However, this threshold of closeness can be dependent on the point and the bidder. Therefore, we define the coverage variable $C_{ij}$ and set $C_{ij}=1$ only if $\exists j' \in \bundle$ such that $d(j,j') \leq r_{ij}$ where $r_{ij}$ is the \emph{radius} threshold for point $j$ and bidder $i$, we set $C_{ij}=0$ otherwise. Note that our model enables a fine-grained specification of the radius $r_{ij}$, i.e., at the level of a specific bidder and a specific point. $r_{ij}$ determines the desired accuracy level buyer $i$ wants for point $j$. Practically, the radius value $r_{ij}$ can be set according to the topic (such as politics or sports) point $j$ belongs to and the desired level of accuracy of bidder $i$ for covering that topic.     

Furthermore, since each point  provides a different measure of value which can be bidder dependent, we associate a value $w_{ij} \ge 0$ for each point $j \in \db$ and each bidder $i \in [m]$. Similar to the radius, our model enables the weights to be specified at the level of the point and the buyer. In a practical setting, each buyer might be interested in a specific topic and accordingly topics which the buyer finds more relevant will receive higher weights. Based on the above we define the total coverage bidder $i$ gains from dataset $\bundle$ to be $\cov_i$ where we have 
\begin{align}\label{eq:cov_i_def}
  \textstyle  \cov_i(\bundle) = \sum_{j \in \db} w_{ij} \cdot C_{ij}
\end{align}
 It is straightforward to verify that this coverage function is submodular -- i.e., $\cov_i(\bundle_1 \cup \{x\}) - \cov_i(\bundle_1) \ge  \cov_i(\bundle_2 \cup \{x\}) - \cov_i(\bundle_2)$ for all $\bundle_1 \subseteq \bundle_2 \subseteq \db$ and $x \notin \bundle_2$. Without loss of generality we assume that $\sum_{j \in \db} w_{ij}=1$\footnote{Given any collection of non-negative coefficients $\{w_{ij}\}_{j \in \db}$ for a bidder $i$ we can simply set $w_{ij} := \frac{w_{ij}}{\sum_{j \in \db} w_{ij}}$. Note that we are implicitly assuming that $\sum_{j \in \db} w_{ij}>0$ which is reasonable as otherwise this would imply that the bidder has no interest in the whole database.}, note that this implies   $0 \leq \cov_i(\bundle) \leq 1$ for any dataset $\bundle$. 

The full value a bidder $i$ gains from dataset $\bundle$ is $v_i(\bundle)$ which is 
\begin{align}
    v_i(\bundle) = \theta_i \cdot  \cov_i(\bundle)
\end{align}
where $\theta_i \ge 0$ is a number that indicates the \emph{importance} of the coverage to the bidder. Any possible value $\theta_i \in \bidset = \{\theta_1, \cdots, \theta_{|\Lambda|}\}$ where $\bidset$ is a public set. 


\paragraph{Public and Private Parameters:} Based on the above, each bidder $i \in [m]$ is characterized by three sets of parameters $\{w_{ij}\}_{j \in \db}, \{r_{ij}\}_{j \in \db}$ and $\theta_i$. We assume that the weight $\{w_{ij}\}_{j \in \db}$ and radius parameters $\{r_{ij}\}_{j \in \db}$ are known publicly. 
We make this assumption because it is likely possible to estimate $\{w_{ij}\}_{j \in \db}$ and $\{r_{ij}\}_{j \in \db}$.
Specifically, the seller, through querying an existing Gen AI product, would likely be able to infer the distance $\{r_{ij}\}_{j \in \db}$ for which an article is considered acceptable in a RAG context for a bidder.
Similarly, by understanding the specialty of the bidder in the marketplace, the seller is likely to be able to estimate the relative value of different regions of the embedding space, i.e., the weights $\{w_{ij}\}_{j \in \db}$.
However, we generally expect that it will be impossible for the seller to estimate the bidder's overall value for RAG data $\theta_i$ and therefore assume that it is private.
We acknowledge that these assumptions may not always hold, but it is well known that the multi-dimensional mechanism design problem is generally intractable.
By assuming that only $\theta_i$ is private, we can formulate the problem as a single dimensional mechanism design problem.


\paragraph{Bidding Setting and Objectives:} Each bidder $i \in [m]$ will report a value $b_i$ privately. Once all bids are submitted, the seller will decide an \emph{allocation} and a \emph{payment}. The allocation is simply the dataset $\bundle_i \subset \db$ that will be given to bidder $i$ while the payment is the price $p(\bundle_i)$ of dataset $\bundle_i$.\footnote{When the price is specified at the level of each point $j \in \db$ the price of a dataset $\bundle$ is simply the sum of the prices of the points it contains, i.e., $p(\bundle)=\sum_{j \in \bundle} p(j)$} As noted earlier a point may not be assigned to more than one bidder, thus $\forall i,i' \in [m]$ we have $\bundle_i \cap \bundle_{i'}=\emptyset$ if $i \neq i'$. The above implies that the utility a bidder $i$ obtains is 
    $u_i(\bundle_i) = v_i(\bundle_i) - p(\bundle_i).$

As in standard mechanism design, the seller wants allocation and payment rules that lead to an \emph{incentive compatible} mechanism, i.e., a mechanism where truthful bidding is a dominant strategy for each bidder. Formally, if $b_i$ denotes bidder $i$'s bid and $\bidprof_{-i}$ denotes all other bids, then $u_i(\theta_i,\bidprof_{-i}) \ge u_i(b_i',\bidprof_{-i}) \ \forall b_i' \text{ and } \forall \bidprof_{-i}$. In other words, a bidder never gains higher utility by reporting a value other than $\theta_i$ regardless of all other bids. Note that we assume that the bid values $b_i$ like the true values $\theta_i$ are restricted to the same set $\bidset$. 

Besides incentive compatibility, given a bidding profile $\bidprof$ the seller wants to maximize the social welfare defined as 
    $\welf(\bidprof) = \sum_{i \in [m]} b_i \cdot \cov_i(\bundle_i).$
Note that the welfare is the sum of the total utilities (including that of the seller) but it excludes the payment values since the seller's utility is the sum of the payments across all bidders $\sum_{i \in [m]} p(\bundle_i)$ and therefore the payment cancels out. We will show in Section \ref{sec:new_algorithmic} that welfare maximization in our setting is NP-hard, therefore finding a welfare maximizing allocation is not possible to do in polynomial time unless P=NP. Accordingly, we aim at $\alpha$-optimal incentive compatible mechanisms which (in addition to incentive compatibility) for a given bid profile $\bidprof$ lead to a welfare value of at least $\alpha \welf^*(\bidprof)$ where $\welf^*(\bidprof)$ is the optimal welfare value for bid profile $\bidprof$. 


Additionally, we note that an important property that we target in designing our mechanism is \emph{individual rationality} which states that any truthful bidder should obtain non-negative utility regardless of the other bids, i.e., $u_i(\theta_i,\bidprof_{-i}) \ge 0 \ , \forall \bidprof_{-i}$. Moreover, if we use randomized algorithms, then we consider the expected value of the utilities and the welfare.  

Finally, in Appendix \ref{app:revenue}, we show that an $\alpha$ approximation welfare maximizing algorithm can be used to guarantee an $\alpha$ approximation for the seller's revenue $\sum_{i \in [m]} p(\bundle_i)$. In fact, if the welfare maximizing algorithm is incentive compatible and individually rational then so would the revenue maximizing algorithm. Therefore, for simplicity we focus on discussing only welfare maximization for the rest of the paper since revenue maximization is immediately implied.


\section{Welfare Maximization: Computational Hardness and Approximation}\label{sec:new_algorithmic}

Our objective is to maximize the social welfare $\welf(\bidprof)$. Unfortunately, we show that the problem is actually NP-hard. Interestingly, all of our hardness results hold even when the weights are constant, i.e., $w_{ij}=w >0 , \ \forall j \in \db, \forall i \in [m]$. However, before we show that welfare maximization is NP-Hard, we note that a common thread in combinatorial auctions has considered using the demand oracle (or query) \cite{dobzinski2005approximation,khot2005inapproximability,feige2006approximation,assadi2020improved} which is defined as follows:
\begin{definition}\label{def:demand_oracle}
Given a price vector $\priceprof$ over the points in the database and a bidder $i \in [m]$ the demand oracle returns the subset of points that maximize utility for bidder $i$, i.e., $\argmax\limits_{\bundle \subset \db} \big( v_i(\bundle)-p(\bundle) \big)$. 
\end{definition}

The demand oracle can be used to design incentive compatible auctions with high welfare. In fact, the recent work of \citet{assadi2020improved} achieves an approximation ratio of $O(\frac{1}{(\log \log n)^3})$ having a small dependence on the number of items. However, in some settings the implementation of the demand oracle might require solving an NP-hard problem \cite{dobzinski2006improved}, which render demand oracle based algorithms of little use. The proposition below shows that this is   the case in our setting as well.

\begin{proposition}\label{th:demand_oracle_is_np_hard}
The demand oracle cannot be implemented in polynomial time unless $P=NP$.
\end{proposition}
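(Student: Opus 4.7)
By Definition~\ref{def:demand_oracle}, implementing the demand oracle for bidder $i$ at a price vector $\priceprof$ amounts to solving
\[
\max_{\bundle \subseteq \db} \ \theta_i \sum_{j \in \db} w_{ij} \, C_{ij}(\bundle) \;-\; \sum_{j' \in \bundle} p(j'),
\]
which is a prize-collecting coverage problem: each chosen $j' \in \bundle$ costs $p(j')$ and covers every $j$ with $d(j,j') \le \rad_{ij}$, earning prize $\theta_i w_{ij}$. My plan is to reduce the NP-complete Dominating Set problem to this optimization, which also aligns with the excerpt's remark that hardness persists even when the weights are constant.

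Given a graph $G = (V, E)$ with $|V| = n$ and threshold $k$, I build a single-bidder instance with $\db = V$. I set $d(u, u) = 0$, $d(u, v) = 1$ whenever $(u, v) \in E$, and $d(u, v) = 2$ for all remaining pairs; fix $w_{i,v} = 1/n$ and $\rad_{i,v} = 1$ for every $v \in V$; take $\theta_i = 1$; and price every point uniformly at $p(v) = c$ for a constant $c$ to be chosen. Under these choices, $C_{i,v}(\bundle) = 1$ exactly when $v$ is dominated by $\bundle$ in $G$ (i.e., $v \in \bundle$ or some neighbor of $v$ lies in $\bundle$), so the demand oracle's objective collapses to
\[
\tfrac{1}{n}\,\bigl|\{v \in V : v \text{ is dominated by } \bundle\}\bigr| \;-\; c\,|\bundle|.
\]

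Choosing $c = \tfrac{1}{n(k+1)}$ yields a clean threshold. If $G$ admits a dominating set of size at most $k$, the optimum is at least $1 - k/(n(k+1)) > 1 - 1/n$. If no such dominating set exists, every $\bundle$ with $|\bundle| \le k$ leaves at least one vertex undominated and hence has value at most $(n-1)/n = 1 - 1/n$, while every $\bundle$ with $|\bundle| \ge k+1$ incurs cost at least $(k+1)c = 1/n$ and so also has value at most $1 - 1/n$. Thus any polynomial-time demand oracle would decide Dominating Set in polynomial time, contradicting $P \ne NP$.

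The main obstacle I anticipate is the parameter tuning: the constant $c$ must be set so that the ``yes'' and ``no'' cases of Dominating Set are strictly separated \emph{uniformly} over all choices of $\bundle$, and this separation must be achieved while keeping the weights constant across points as the excerpt's hardness remark requires. A secondary consideration is the distance function---since the coverage condition only uses $d(j,j') \le \rad_{ij}$, the construction needs no metric structure, though the distances above already satisfy the triangle inequality in case a metric is required later.
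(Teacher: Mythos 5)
Your reduction is correct and is essentially the paper's own proof: both reduce from Dominating Set using a single bidder with uniform weights $1/n$, radius $1$, graph-induced distances, and a uniform per-point price below $1/n$ (your $c=\tfrac{1}{n(k+1)}$ is one valid instantiation of the paper's generic $0<p<1/n$). Your case analysis in the ``no'' direction is in fact slightly cleaner than the paper's, which separately bounds the coverage of sets of size $k-\ell$; otherwise the arguments coincide.
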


We now study welfare maximization allocation. Notably, if there is only a single bidder, welfare maximization is trivial -- just allocate all items to this bidder would maximize revenue. Surprisingly, this welfare maximization problem immediately becomes NP-hard even when there are $ 2$ bidders. 

\begin{theorem}\label{thm:welfare-hard}
Welfare maximization is NP-hard even in the following two special situations:
\begin{itemize}[leftmargin=2em] \vspace{-2mm}
  \item There are only two bidders and the radius values in the value functions differ across bidders but do not differ across data points (i.e.,  $\forall i \in [m]: r_{ij}=r_i$).  \vspace{-2mm} 
    \item There are only three bidders and their value functions   have the same radius  (i.e.,  $ r_{ij}=r \, \forall i,j$). 
\end{itemize}
\end{theorem}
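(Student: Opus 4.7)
The plan is to prove each case by an explicit polynomial-time reduction from a known NP-hard covering problem, with the combinatorial structure of the instance encoded into the $r$-ball incidence pattern of a carefully constructed metric on the database points.

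For the two-bidder case, I would reduce from a covering decision problem such as \textsc{Set Cover} or \textsc{Maximum Coverage}. Create one ``set-point'' $x_S$ per input set and one ``element-point'' $x_u$ per universe element, embedded in a metric so that $d(x_S, x_u) \le r_1$ precisely when $u \in S$, while all other pairwise distances strictly exceed both $r_1$ and $r_2$. Give bidder~$1$ radius $r_1$ together with weights supported on the element-points, so that her coverage equals the fraction of universe elements spanned by the set-points in her bundle; give bidder~$2$ a very small radius $r_2$ together with uniform weights supported broadly enough that she prefers to retain as many points as possible. Tuning the ratio $\theta_1/\theta_2$ then trades off ``giving bidder~$1$ one more set-point to broaden her coverage'' against ``keeping another point with bidder~$2$'', so that the maximum welfare crosses a prescribed threshold if and only if the covering instance has a yes-answer.

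For the three-bidder common-radius case, the $r$-ball structure is identical across bidders, so all bidder-specific freedom is contained in the weights $\{w_{ij}\}_j$ and the scalars $\theta_i$. I would reduce from \ETC{} (equivalently from 3-Dimensional Matching). Place one point per universe element and one point per $3$-set, with a metric in which each set-point sits within radius $r$ of exactly its three elements and all other distances strictly exceed $r$. The three bidders are then used to enforce the threefold exactness of the cover: give each bidder a weight vector concentrated on a distinct color class drawn from the X3C incidence structure, with large common $\theta$. The maximum welfare is attained iff the three bidders can simultaneously collect their respective classes without overlap, which happens precisely when the input admits an exact $3$-cover.

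The main obstacle in both reductions is realizing the prescribed incidence by a bona fide metric: we need $d(x,y) \le r_i$ on exactly the intended pairs and $d(x,y) > r_i$ elsewhere, while still satisfying the triangle inequality. I would handle this either via a shortest-path metric on an edge-weighted auxiliary graph with weights calibrated against the chosen radii, or by embedding the bipartite incidence in a sufficiently high-dimensional Euclidean space where each set-point is placed just inside the intended $r$-balls and well outside the others. Once the metric is validated, completing each reduction is a matter of routine arithmetic: choose the $\theta_i$ and the welfare threshold so that yes- and no-instances are separated by a strictly positive, polynomial-sized gap.
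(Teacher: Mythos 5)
Your two-bidder reduction is essentially the paper's. The paper reduces from Dominating Set using the shortest-path metric of the input graph, gives bidder~1 radius $r_1=1$ and bidder~2 radius $r_2=0$, and uses uniform weights $w_{ij}=1/n$; bidder~2's zero radius makes her coverage equal to the number of points she retains, so the welfare is $\frac{N(V_1)+|V\setminus V_1|}{n}$ and the threshold $2-k/n$ separates yes- from no-instances. Your version (set-points versus element-points, a tiny $r_2$ so bidder~2 acts as a counter, threshold tuning on $\theta_1/\theta_2$) is the same mechanism and should go through once the metric and arithmetic are pinned down; note only that the paper achieves it with uniform weights and equal $\theta$'s, whereas you invoke bidder-specific weight supports --- permitted by the theorem statement, but a slightly weaker form of the result.

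The three-bidder sketch has a genuine gap. With a common radius the ball structure is identical across bidders, as you say, but your claimed equivalence --- ``the three bidders can simultaneously collect their respective classes without overlap iff the input admits an exact $3$-cover'' --- fails in both directions. Forward: each set-point can be allocated to only one bidder, so a single disjoint exact cover $\{T_1,\dots,T_q\}$ cannot simultaneously give all three bidders full coverage of their color classes (already for $q=1$ a single triple cannot serve three bidders). Backward: all three bidders covering their own classes only requires a $3$-coloring of (a subset of) the triples so that the color-$i$ triples cover class $i$, a condition that can hold in instances with no exact cover. The paper sidesteps this by reducing from the \emph{domatic number} problem (NP-complete for $k\ge 3$): the allocated items are the very points that must be covered, each point covers itself and its graph neighbors within the common radius $r=1$, and welfare $b\cdot k$ is attainable iff the vertex set partitions into $k$ disjoint dominating sets --- exactly the structure in which $k$ bidders each attain full coverage from disjoint shares. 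To salvage an exact-cover-style reduction you would need a gadget forcing each allocated item to contribute to exactly one bidder's full-coverage requirement; the domatic partition gives that for free.
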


The proofs of both hardness results are non-trivial, and are novel to the best of our knowledge. This is because previous hardness of submodular welfare maximization reduction does not apply to our setting since our valuation function is more special than general coverage functions, hence its intractability is more difficult to prove. The two   claims in  Theorem \ref{thm:welfare-hard} are reduced from dominating set and domatic number problem \cite{kaplan1994domatic} respectively; details are deferred to Appendix \ref{app:proof_welfare_NP_hard_m2} and \ref{app:proof_welfare_NP_hard_m3}. 

Despite the intractability of welfare maximization even with few bidders, next we introduce a polynomial time approximation algorithm that computes a $(1- 1/e)$-optimal solution for \emph{any} number of bidders. Before we show our algorithm, we introduce a convenient notation:  for a point $j \in \db$ and bidder $i \in [m]$ we denote by $N_i(j)$ the set of points in $\db$ that are within a distance of $\rad_{ij}$ from point $j$. Formally, $N_i(j)=\{j' \in \db| d(j,j') \leq \rad_{ij}\}$.

The starting point of our algorithm is a linear program (LP) relaxation of the allocation problem as shown below. 
We use the symbols $\bX_{ij}$ and $\bC_{ij}$ with a bar for the LP decision variables to emphasize that they are fractional (i.e., not necessarily equal to $0$ or $1$, but in $[0,1]$). 
\begin{subequations}  \label{opt:lp}  
\begin{equation} \label{lp:obj}  
\textstyle \max\limits_{\bX_{ij},\bC_{ij}}  \quad \sum_{i \in [m]} \theta_i \cdot \Big( \sum_{j \in  \db} w_{ij} \cdot \bC_{ij} \Big) 
\end{equation}    
\begin{equation}   \label{lp_const:between_0and1}
\textstyle \forall i\in [m], \forall j \in \db: \quad  0 \leq \bX_{ij} , \bC_{ij} \leq 1; \quad \bC_{ij} \leq \sum_{j' \in N_i(j)} \bX_{ij'} 
\end{equation}
\begin{equation}   \label{lp_const:to_atmost_onebidder}
\textstyle \forall j \in \db: \quad  \sum_{i \in [m]} \bX_{ij} \leq 1 
\end{equation}
\end{subequations}
In the LP, $\bX_{ij}$ represents the (relaxed) fractional allocation of point $j$ to bidder $i$ whereas $\bC_{ij}$ indicates whether the point is covered for bidder $i$ or not. Note that a point $j$ may be covered for a bidder $i$ even if it was not allocated to him since the bidder may have been allocated another point $j' \in N_i(j)$. Since the welfare-optimal allocation is a feasible solution to this LP,  we have the following.
\begin{lemma}\label{lem:opt_lp}
For any given bid profile $\bidprof$ the optimal value of LP \eqref{opt:lp} satisfies $\optlp(\bidprof) \ge \welf^*(\bidprof)$. 
\end{lemma}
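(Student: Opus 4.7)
The plan is the standard LP relaxation argument: take the welfare-optimal integral allocation and exhibit it as a feasible point of LP \eqref{opt:lp} with matching objective value. Concretely, let $\{\bundle_i^*\}_{i\in[m]}$ denote the optimal exclusive allocation achieving $\welf^*(\bidprof)$, so that $\bundle_i^* \cap \bundle_{i'}^* = \emptyset$ for $i\neq i'$. I would set the LP variables to mirror this allocation by defining $\bX_{ij} = \mathbf{1}[j \in \bundle_i^*]$ and, analogously to the integer coverage variable $C_{ij}$ from Equation \eqref{eq:cov_i_def}, $\bC_{ij} = \mathbf{1}[\exists j' \in \bundle_i^*: d(j,j') \leq r_{ij}]$, which equivalently says $\bC_{ij} = 1$ iff $N_i(j) \cap \bundle_i^* \neq \emptyset$.

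Next I would verify that this assignment satisfies each constraint of LP \eqref{opt:lp}. The box constraints $0 \leq \bX_{ij}, \bC_{ij} \leq 1$ in \eqref{lp_const:between_0and1} are immediate since the variables are $0/1$-valued. For the coverage constraint $\bC_{ij} \leq \sum_{j' \in N_i(j)} \bX_{ij'}$, the only nontrivial case is $\bC_{ij} = 1$, and in that case by construction some $j' \in N_i(j) \cap \bundle_i^*$ exists, so $\bX_{ij'} = 1$ and the right-hand side is at least $1$. Finally, the exclusivity constraint \eqref{lp_const:to_atmost_onebidder}, $\sum_{i} \bX_{ij} \leq 1$, follows directly from the fact that $\{\bundle_i^*\}_{i\in[m]}$ is an exclusive allocation, so each $j$ appears in at most one $\bundle_i^*$.

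To finish, I would match the objectives. Using the notation of the LP (identifying $\theta_i$ with the bid $b_i$ in the welfare expression), the LP value at this feasible point equals $\sum_{i \in [m]} b_i \sum_{j \in \db} w_{ij} \bC_{ij} = \sum_{i \in [m]} b_i \cdot \cov_i(\bundle_i^*) = \welf^*(\bidprof)$, where the first equality uses the fact that $\bC_{ij}$ agrees with $C_{ij}$ under allocation $\bundle_i^*$, and the second is the definition \eqref{eq:cov_i_def} of $\cov_i$. Since $\optlp(\bidprof)$ is the maximum over all feasible LP solutions, it is at least the value achieved by this particular one, giving $\optlp(\bidprof) \geq \welf^*(\bidprof)$.

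There is no real obstacle here; the lemma is a straightforward relaxation statement, and the only thing to be mindful of is making sure the definition of $\bC_{ij}$ from the integral allocation is consistent with the LP's coverage inequality and with the combinatorial coverage $C_{ij}$ used in defining $\cov_i$.
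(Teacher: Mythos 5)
Your proposal is correct and follows essentially the same route as the paper's proof: exhibit the welfare-optimal integral allocation as a feasible $0/1$ point of LP \eqref{opt:lp}, check constraints \eqref{lp_const:between_0and1} and \eqref{lp_const:to_atmost_onebidder}, and observe that the objective at this point equals $\welf^*(\bidprof)$, so the LP optimum can only be larger.
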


Our full algorithm $\algone$ (Algorithm \ref{alg:lp_and_round}) is simple. We start by solving LP \eqref{opt:lp}, then round the resulting values $\bX_{ij}$ as follows, for a point $j \in \db$ we allocate it to bidder $i$ with probability $\bX_{ij}$. Note that this rounding is possible since by constraint \eqref{lp_const:to_atmost_onebidder} we have $\sum_{i \in [m]} \bX_{ij} \leq 1 $, further by constraint \eqref{lp_const:between_0and1} we have $\bX_{ij} \ge 0$. Note further that the sampling is also done in an \emph{independent}  manner from point to point. It follows that the resulting integer values $X_{ij}$ satisfy $\Prob(X_{ij}=1) = \bX_{ij}$.

\begin{algorithm}
\caption{LP + Rounding (\textbf{\textsc{LPR}})}
\label{alg:lp_and_round}
\begin{algorithmic}[1]
\INPUT Database $\db$, bidders $i \in [m]$, bidders' parameters $\{w_{ij},r_{ij}\}_{j\in \db, i \in [m]}$ and$\{\theta_i\}_{i \in [m]}$. 
\STATE Run LP \eqref{opt:lp} and obtain the optimal solution $\bX_{ij}$ and $\bC_{ij}$.  
\STATE For a given point $j \in \db$ assign it to bidder $i$ with probability $\bX_{ij}$.  
\STATE Form the dataset allocation $(\bundle_1,\bundle_2,\dots,\bundle_m)$ using the values $X_{ij}$. 
\end{algorithmic}
\end{algorithm}

The resulting coverage values for the bidders (and the welfare as well) are dependent on $C_{ij}$. However, the values $C_{ij}$ have a somewhat complicated dependence on the values of $X_{ij}$. We now introduce the following lemma which uses the optimal LP solution $\bX_{ij}, \bC_{ij}$ to give an explicit form for the expected value of $C_{ij}$ as well as lower bounding the expected total coverage for each bidder.  


\begin{lemma}\label{lem:rounding_approx}
Using Algorithm \ref{alg:lp_and_round} with optimal solution $\bX_{ij}$ and $\bC_{ij}$ the final rounded solution satisfies 
\begin{align}
    \textstyle \E[C_{ij}]    & = 1 - \prod_{j' \in N(j)} (1-\bX_{ij'}) \label{eq:point_cov_exact_form}  \\
    \E[\cov_{i}]  & \ge (1-1/e) \cdot  \bcov_{i} \label{eq:point_cov_lb} 
\end{align}
where $\bcov_i = \sum_{j \in \db} w_{ij} \bC_{ij}$. 
\end{lemma}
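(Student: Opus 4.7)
The plan is to handle the two claims separately, with part one falling out immediately from the independence built into the rounding step and part two relying on a well-known inequality relating a product to an exponential.

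For \eqref{eq:point_cov_exact_form}, I would observe that $C_{ij} = 1$ if and only if bidder $i$ is allocated at least one point $j' \in N_i(j)$, i.e., $X_{ij'} = 1$ for some $j' \in N_i(j)$. In Algorithm \ref{alg:lp_and_round} the rounding draws each $X_{ij'}$ independently with $\Pr[X_{ij'}=1] = \bX_{ij'}$ (the constraint $\sum_i \bX_{ij} \le 1$ is only used to guarantee that the per-point assignment is a well-defined probability distribution; independence across different points $j'$ is preserved). Therefore
\begin{equation*}
\textstyle \Pr[C_{ij}=0] = \Pr\!\left[\,\forall j' \in N_i(j): X_{ij'}=0\,\right] = \prod_{j' \in N_i(j)}(1-\bX_{ij'}),
\end{equation*}
and taking the complement yields \eqref{eq:point_cov_exact_form}.

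For \eqref{eq:point_cov_lb}, since $\E[\cov_i] = \sum_{j\in\db} w_{ij}\,\E[C_{ij}]$ and $\bcov_i = \sum_{j\in\db} w_{ij}\,\bC_{ij}$, it suffices to establish the pointwise bound $\E[C_{ij}] \ge (1-1/e)\,\bC_{ij}$ for every $i,j$. Write $s := \sum_{j' \in N_i(j)} \bX_{ij'}$ and let $k = |N_i(j)|$. Applying the AM-GM inequality to the terms $(1-\bX_{ij'})$ and then the standard bound $(1 - s/k)^k \le e^{-s}$ gives
\begin{equation*}
\textstyle \E[C_{ij}] \;=\; 1 - \prod_{j' \in N_i(j)}(1-\bX_{ij'}) \;\ge\; 1 - \left(1 - \tfrac{s}{k}\right)^{k} \;\ge\; 1 - e^{-s}.
\end{equation*}
Then I would invoke the elementary inequality $1 - e^{-s} \ge (1-1/e)\min(1,s)$ for all $s \ge 0$ (easy to verify by concavity of $s \mapsto 1 - e^{-s} - (1-1/e)s$ on $[0,1]$, and trivial for $s \ge 1$). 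Combining this with the LP feasibility constraint $\bC_{ij} \le \sum_{j' \in N_i(j)} \bX_{ij'} = s$ together with $\bC_{ij} \le 1$ (so $\bC_{ij} \le \min(1,s)$) produces
\begin{equation*}
\E[C_{ij}] \;\ge\; (1-1/e)\min(1,s) \;\ge\; (1-1/e)\,\bC_{ij}.
\end{equation*}
Summing against $w_{ij}$ yields \eqref{eq:point_cov_lb}.

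The only subtlety I expect is ensuring the linkage between the fractional coverage variable $\bC_{ij}$ and the sum $s = \sum_{j'\in N_i(j)} \bX_{ij'}$: one must use both LP constraints in \eqref{lp_const:between_0and1} (the upper bound $\bC_{ij} \le 1$ and $\bC_{ij} \le \sum_{j'} \bX_{ij'}$) to conclude $\bC_{ij} \le \min(1,s)$, which is exactly the quantity that the inequality $1-e^{-s} \ge (1-1/e)\min(1,s)$ controls. Everything else is a routine application of AM-GM and the concavity argument, so I do not foresee any further obstacle beyond noting that the independence across points $j'$ in the rounding is what justifies the product form in \eqref{eq:point_cov_exact_form}.
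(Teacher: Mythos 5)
Your proposal is correct and follows essentially the same route as the paper: independence of the per-point rounding gives the product form, the product is bounded by $e^{-s}$ with $s=\sum_{j'\in N_i(j)}\bX_{ij'}$, and the concavity inequality $1-e^{-x}\ge(1-1/e)x$ on $[0,1]$ combined with the LP constraint $\bC_{ij}\le\min(1,s)$ finishes the pointwise bound. The only cosmetic differences are that the paper applies $1-x\le e^{-x}$ termwise rather than via AM--GM (your AM--GM step is valid but an unnecessary detour to the same quantity $1-e^{-s}$), and it substitutes $\bC_{ij}\le s$ inside the exponential before invoking the concavity fact rather than carrying $\min(1,s)$ through.
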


Since the above lemma provides  a lower bound for the coverage each bidder receives, it follows that we can lower bound the welfare as shown in the theorem below. 
\begin{theorem}\label{th:lp_round_approx}
Algorithm \ref{alg:lp_and_round} has an approximation ratio of $(1-1/e)$.    
\end{theorem}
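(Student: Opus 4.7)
The plan is to simply chain together the two earlier results. Lemma \ref{lem:rounding_approx} bounds the expected coverage per bidder in terms of the LP's fractional coverage values, while Lemma \ref{lem:opt_lp} tells us that the LP objective upper bounds the optimal integral welfare. Combining these in the obvious way immediately yields the $(1-1/e)$ approximation guarantee.

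First I would write out the expected welfare of the output of Algorithm \ref{alg:lp_and_round} under bid profile $\bidprof$ as
\[
\E[\welf(\bidprof)] \;=\; \sum_{i \in [m]} b_i \cdot \E[\cov_i(\bundle_i)].
\]
Next, invoking the second part of Lemma \ref{lem:rounding_approx} on each bidder term, I would lower bound this by
\[
\E[\welf(\bidprof)] \;\ge\; (1-1/e) \sum_{i \in [m]} b_i \cdot \bcov_i \;=\; (1-1/e) \sum_{i \in [m]} b_i \sum_{j \in \db} w_{ij} \, \bC_{ij}.
\]
I would then observe that the right-hand sum is exactly the LP objective \eqref{lp:obj} (with the bids $b_i$ playing the role of $\theta_i$) evaluated at the optimal LP solution $(\bX_{ij},\bC_{ij})$, hence equal to $\optlp(\bidprof)$.

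Finally, I would apply Lemma \ref{lem:opt_lp}, which states $\optlp(\bidprof) \ge \welf^*(\bidprof)$, to conclude $\E[\welf(\bidprof)] \ge (1-1/e) \cdot \welf^*(\bidprof)$, as desired. There is no real obstacle here: the two lemmas were set up precisely to make this chain of inequalities go through. The only small subtlety worth explicitly noting is that the independent rounding in Step 2 of Algorithm \ref{alg:lp_and_round} is what allows Lemma \ref{lem:rounding_approx} to be invoked bidder-by-bidder, and the (mild) identification of the LP value with $\sum_i b_i \bcov_i$ at optimality.
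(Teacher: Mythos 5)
Your proof is correct and follows exactly the same chain as the paper's: linearity of expectation, the per-bidder bound from Lemma \ref{lem:rounding_approx}, identification of $\sum_i b_i \bcov_i$ with $\optlp(\bidprof)$, and then Lemma \ref{lem:opt_lp}. No differences worth noting.
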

\section{Incentive Compatible Welfare Maximization}\label{sec:new_ic_theory}

In this section, we turn to the study of incentive compatibility of welfare maximization, i.e., designing a mechanism that incentivizes every buyer $i$ to report their private type $\theta_i$ \textit{truthfully} for the data auction. We first present a nice property of the allocation rule of an incentive compatible mechanism. Specifically, we generalize Myerson's results \cite{myerson1981optimal} to our setting 
showing that any incentive compatible mechanism's allocation rule has to necessarily be monotonic. In addition to the necessity of the monotonicity property, we show an efficient algorithm for calculating the payment in order to satisfy both incentive compatibility and individual rationality. First, we start with a formal definition of a monotonic allocation rule. 

\begin{definition}\label{def:mono}[Monotonic Allocation Rule]
An allocation is monotonic if for every bidder $i \in [m]$ and bids $\bidprof_{-i}$ then we have $\E[\cov_i(\bundle)] \ge \E[\cov_i(\bundle')]$ if $b_i \ge b'_i$ where $\bundle$ and $\bundle'$ are the datasets allocated to bidder $i$ under bid profiles $(b_i,\bidprof_{-i})$ and $(b'_i,\bidprof_{-i})$, respectively. 
\end{definition}
A fundamental result of Myerson \cite{myerson1981optimal} reduces a mechanisms' truthfulness to monotonicity in allocation probability in single-item auctions. To handle our combinatorial allocation case, we start by showing a novel form of monotonicity for our combinatorial auction with coverage-based valuations.  
\begin{lemma}\label{th:gen_myerson_lemma}
For any incentive compatible mechanism, its allocation rule must be monotonic. Conversely, given a monotonic allocation rule, there always exists a corresponding efficiently computable payment such that truthful bidding is a dominant strategy and individually rational for each bidder. 
\end{lemma}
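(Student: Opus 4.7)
The plan is to mirror the classical Myerson characterization adapted to our discrete type space $\bidset$, exploiting the crucial fact that each bidder's value $v_i(\bundle) = \theta_i \cdot \cov_i(\bundle)$ is \emph{linear} in the one-dimensional private parameter $\theta_i$. For a fixed profile $\bidprof_{-i}$, I define the scalar quantities $x_i(b) := \E[\cov_i(\bundle_i(b,\bidprof_{-i}))]$ and $P_i(b) := \E[p(\bundle_i(b,\bidprof_{-i}))]$, where the expectation is over any internal randomness of the allocation. Then bidder $i$'s expected utility from bidding $b$ when the true type is $\theta_i$ equals $\theta_i \cdot x_i(b) - P_i(b)$, which is affine in the ``quantity'' $x_i(b)$. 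The monotonicity in Definition \ref{def:mono} is precisely the statement that $x_i(\cdot)$ is non-decreasing, which is exactly the single-dimensional allocation curve that Myerson's argument manipulates.

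For the forward direction, fix any bidder $i$, profile $\bidprof_{-i}$, and two reports $b > b'$ in $\bidset$. Incentive compatibility when the true type is $b$ yields $b\, x_i(b) - P_i(b) \ge b\, x_i(b') - P_i(b')$, and incentive compatibility when the true type is $b'$ yields $b'\, x_i(b') - P_i(b') \ge b'\, x_i(b) - P_i(b)$. Adding the two inequalities, the payment terms cancel and I obtain $(b - b')(x_i(b) - x_i(b')) \ge 0$; since $b > b'$, this forces $x_i(b) \ge x_i(b')$, which is exactly the monotonicity in Definition \ref{def:mono}.

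For the converse, I will construct payments via a discrete Myerson-style formula. Sort $\bidset = \{\theta^{(1)} < \cdots < \theta^{(|\bidset|)}\}$ and adopt the convention $\theta^{(0)} := 0$. When bidder $i$ reports $\theta^{(k)}$, I set the expected payment to be
\[
P_i(\theta^{(k)}) \;:=\; \theta^{(k)}\, x_i(\theta^{(k)}) \;-\; \sum_{\ell=1}^{k} x_i(\theta^{(\ell)})\,\bigl(\theta^{(\ell)} - \theta^{(\ell-1)}\bigr).
\]
To verify IC, consider a true type $\theta^{(k)}$ that misreports $\theta^{(j)}$. The utility gap between truthful reporting and deviation telescopes into a sum whose sign is pinned down by monotonicity of $x_i$: for $j > k$, the incremental terms $x_i(\theta^{(\ell)})(\theta^{(\ell)}-\theta^{(\ell-1)})$ for $\ell = k+1,\dots,j$ are each bounded above by $x_i(\theta^{(j)})(\theta^{(\ell)}-\theta^{(\ell-1)})$, and the case $j < k$ is symmetric with the bound going the other direction. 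Individual rationality is immediate since the truthful utility equals $\sum_{\ell=1}^{k} x_i(\theta^{(\ell)})(\theta^{(\ell)} - \theta^{(\ell-1)}) \ge 0$, as all summands are non-negative. Efficiency is inherited from the allocation: the formula only needs the values $x_i(\theta^{(\ell)})$ for $\ell=1,\dots,|\bidset|$, and in our setting each of these admits the closed form \eqref{eq:point_cov_exact_form}, giving an $O(|\bidset|)$ post-processing step on top of the allocation.

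The main subtlety to be careful about is that our allocations are randomized (via the LP rounding of Algorithm \ref{alg:lp_and_round}), so monotonicity, incentive compatibility, and individual rationality are all in the \emph{expected} sense; the definitions and derivations above have been set up in terms of $x_i(\cdot)$ and $P_i(\cdot)$ precisely so that this interpretation is built in. A second minor care is that the expected payment $P_i(\theta^{(k)})$ must be implementable by an actual (possibly randomized) per-dataset pricing rule $p(\bundle_i)$; since the only constraint on $p$ is that its expectation match $P_i(\theta^{(k)})$, simply charging the deterministic amount $P_i(\theta^{(k)})$ for whatever realized bundle is produced suffices and preserves all in-expectation guarantees.
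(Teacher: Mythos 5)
Your proof is correct and follows essentially the same route as the paper's: the forward direction adds the two incentive constraints to cancel payments and obtain $(b-b')(x_i(b)-x_i(b'))\ge 0$, and the converse uses the discrete Myerson payment, with your formula $P_i(\theta^{(k)})=\theta^{(k)}x_i(\theta^{(k)})-\sum_{\ell\le k}x_i(\theta^{(\ell)})(\theta^{(\ell)}-\theta^{(\ell-1)})$ telescoping to exactly the paper's threshold-sum form $\sum_{\lambda_t<b}\lambda_t\,(x_i(\lambda_{t+1})-x_i(\lambda_t))$. The telescoping IC verification and the nonnegativity of the truthful utility for individual rationality match the paper's argument step for step.
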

Lemma \ref{th:gen_myerson_lemma} offers a characterization of  incentive compatible mechanisms for combinatorial allocation  with our coverage-based value function.  This differs from and in some sense strictly generalizes the classical monotonicity of allocating a single item, so we believe our lemma could be of independent interest. A relevant result is by  \citet{nisan2007computationally} who also show a  generalization of Myerson's lemma to combinatorial auctions, but with   single-minded value (i.e., the buyer either has value $v$ if getting her demanded set, or value $0$ otherwise). However, our valuation function here is fundamentally different.

At this point, one would hope that our \algone{} algorithm (Algorithm \ref{alg:lp_and_round}) from the previous section would be monotonic, hence truthfulness is directly achieved. Unfortunately,    monotonicity turns out to be a very delicate property for allocation algorithms. Indeed, we find that  almost all natural allocation approaches for submodular functions  (e.g., greedy \citep{bilmes2022submodularity,lehmann2001combinatorial} and continuous greedy   \cite{vondrak2008optimal}) are \emph{not} monotone; neither is our \algone{} algorithm. 


\begin{fact}\label{thm:lp_round_not_monotone}
\algone{} (Algorithm \ref{alg:lp_and_round}) is not monotonic in the sense of Definition \ref{def:mono}.
\end{fact}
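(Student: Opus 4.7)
The plan is to prove Fact \ref{thm:lp_round_not_monotone} by exhibiting an explicit small-scale counterexample and analyzing what Algorithm \ref{alg:lp_and_round} produces on it. The key structural reason to expect non-monotonicity is the mismatch between the LP objective and the rounded expected coverage: the LP evaluates bidder $i$ at $\bar{\cov}_i=\sum_j w_{ij}\bar{C}_{ij}$, where $\bar{C}_{ij}$ is a linear-with-cap function of the $\bar{X}_{ij'}$'s (constraint \eqref{lp_const:between_0and1}), while the actual expected coverage under the rounding equals $\sum_j w_{ij}\bigl(1-\prod_{j'\in N_i(j)}(1-\bar{X}_{ij'})\bigr)$ by Lemma \ref{lem:rounding_approx}, which is strictly concave in each $\bar{X}_{ij'}$. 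Hence two LP-optimal allocations with identical $\bar{\cov}_i$ can have very different $\E[\cov_i]$: a concentrated assignment (e.g.\ $\bar{X}_{1,j_1}=1,\bar{X}_{1,j_2}=0$) attains $\E[\cov_1]$ equal to the full LP value, while a spread assignment (e.g.\ $\bar{X}_{1,j_1}=\bar{X}_{1,j_2}=1/2$) attains strictly less.

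Concretely, I would build an instance with two points $\{j_1,j_2\}$ and bidder $1$ having overlapping neighborhoods $N_1(j_1)=N_1(j_2)=\{j_1,j_2\}$ with $w_{1,j_1}=w_{1,j_2}=1/2$, so that $\bar{\cov}_1=\min(1,\bar{X}_{1,j_1}+\bar{X}_{1,j_2})$ depends only on the sum. Additional bidders are chosen so that at a low bid $b_1$ the LP optimum is (uniquely, or under any reasonable vertex selection) an integer allocation giving bidder $1$ a single point in full, producing $\E[\cov_1]=1-(1-1)(1-0)=1$, while at a larger bid $b_1'>b_1$ the LP shifts to an allocation that splits bidder $1$'s mass fractionally as $\bar{X}_{1,j_1}=\bar{X}_{1,j_2}=1/2$, yielding $\E[\cov_1]=1-(1/2)(1/2)=3/4<1$. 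Since bidder $1$'s expected coverage has strictly decreased even though $b_1$ strictly increased and $\bidprof_{-1}$ is fixed, Definition~\ref{def:mono} is violated.

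The main obstacle I expect is ensuring the fractional allocation is genuinely \emph{selected} by the LP step of Algorithm \ref{alg:lp_and_round} at the higher bid, rather than just being one of several tied optima that a lucky solver might break integrally. To handle this robustly, I would introduce an auxiliary bidder whose slightly asymmetric weights over $\{j_1,j_2\}$ make the $1/2,1/2$ split strictly Pareto-improve upon every integer allocation when $b_1$ is raised past a threshold, while leaving the integer allocation strictly optimal at the lower bid. The delicate bookkeeping is in choosing numerical weights and bids so that (i) at $b_1$ the LP has a unique integer optimum with $\E[\cov_1]=1$, and (ii) at $b_1'$ the LP has a unique fractional optimum with $\E[\cov_1]=3/4$; once this is done, substituting both $\bar{X}$ solutions into the formula $\E[C_{ij}]=1-\prod_{j'\in N_i(j)}(1-\bar{X}_{ij'})$ from Lemma \ref{lem:rounding_approx} completes the argument.
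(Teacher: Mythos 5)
Your proposal is correct and follows essentially the same route as the paper's proof: both exhibit a small instance with mutually covering points where raising $b_1$ moves the LP optimum from an integral allocation (rounded expected coverage $1$) to a fractional split (rounded expected coverage $1-\prod(1-\bar{X}_{1j'})<1$), exploiting exactly the mismatch between the linear-with-cap $\bar{C}_{ij}$ and the product form of $\E[C_{ij}]$ from Lemma \ref{lem:rounding_approx}. Your added concern about forcing the LP solver to actually select the fractional optimum (rather than a tied integral one) is a point the paper's example leaves implicit, so your plan is if anything slightly more careful.
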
 

The above fact is shown by constructing a counter example. However, the core reason behind the failure is that the relation between the rounded coverage variables $C_{ij}$ that decides the final coverage value and the LP coverage variables $\bC_{ij}$ is complicated. More specifically, while the LP solution value in the constructed example is actually monotonic, due to rounding the final expected value for the coverage based on the rounded coverage variables $C_{ij}$ is not monotonic.  


The crux of our approach is that, although Algorithm \ref{alg:lp_and_round} is not monotonic, we are able to make it monotonic through careful post-processing.  Specifically, it can be shown that the optimal coverage values $\bcov_1,\dots,\bcov_m$ each bidder receives from the optimal LP solution before rounding are, in fact, monotonic (see Lemma \ref{lemma:lp_is_monotonic}). As a result, we propose $\algtwo$ (Algorithm \ref{alg:lp_and_round_monotonic}) which utilizes a  post-processing of \emph{data burning}. That is, instead of directly allocating the resulting datasets from Algorithm \ref{alg:lp_and_round} to each bidder, the bidder is allocated the dataset with probability $\pval_i$ and given nothing with probability $1-\pval_i$. For each bidder $i\in [m]$ the probability is set to $\pval_i=(1-1/e) \cdot \frac{\bcov_i}{\E[\cov_i]}$, this ensures both that the probability value is valid (i.e., $\pval_i \in [0,1]$) and that each bidder receives an expected coverage of exactly $(1-1/e) \cdot \bcov_i$. Since for a bidder $i$ the LP coverage value $\bcov_i$ is monotonic it follows that if a bidder instead receives in expectation a positive constant of $(1-1/e)$ times the LP coverage value then the final coverage would also be monotonic. Further, while it is true that allocating nothing to a bidder with probability $1-\pval_i$ would degrade the coverage for bidder $i$, the degradation is still within a factor of $(1-1/e)$ of the LP coverage value and we can thus show that this method leads to the same approximation factor of $(1-1/e)$. Therefore, we can establish the following theorem.

\begin{algorithm}
\caption{LP + Monotonic Rounding (\textbf{\textsc{LPRMono}})}
\label{alg:lp_and_round_monotonic}
\begin{algorithmic}[1]
\INPUT Database $\db$, bidders $i \in [m]$, bidders' public parameters $\{w_{ij},r_{ij}\}_{j\in \db, i \in [m]}$, bidders' privately reported bids $\{\theta_i\}_{i \in [m]}$. 
\STATE Run LP \eqref{opt:lp} and let the output be $\bX_{ij}$ and $\bC_{ij}$.  
\STATE For a given point $j \in \db$ assign it to bidder $i$ with probability $\bX_{ij}$.  
\STATE Form the dataset allocation $(\bundle_1,\bundle_2,\dots,\bundle_m)$ using the values $X_{ij}$. 
\STATE For each bidder $i \in [m]$: \\ 
         \ \ -compute $\bcov_i =  \sum_{j \in \db} w_{ij} \cdot \bC_{ij}$. \\ 
         \ \ -compute $\E[\cov_i] = \sum_{j \in \db} w_{ij} \E[C_{ij}]$ using \eqref{eq:point_cov_exact_form}. \\
         \ \ -if $\bcov_i>0$ set $\pval_i = (1-1/e) \cdot \frac{\bcov_i}{\E[\cov_i]}$ else set $\pval_i=0$.
\STATE Each bidder $i \in [m]$ with probability $\pval_i$ gets dataset $\bundle_i$ and otherwise gets nothing. 
\end{algorithmic}
\end{algorithm}

\begin{theorem}\label{thm:lp_round_monotone}
Algorithm \ref{alg:lp_and_round_monotonic} is monotonic and has an approximation ratio of $1-1/e$.
\end{theorem}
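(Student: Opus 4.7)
The plan is to prove the two claims of the theorem --- monotonicity and $(1-1/e)$-approximation --- by exploiting the single key observation that the data-burning step makes the expected coverage of bidder $i$ under Algorithm \ref{alg:lp_and_round_monotonic} exactly equal to $(1-1/e)\bcov_i$, a linear function of the LP-coverage value $\bcov_i$.

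First I would verify well-definedness of $\pval_i$. By Lemma \ref{lem:rounding_approx} we have $\E[\cov_i]\ge (1-1/e)\bcov_i$, so whenever $\bcov_i>0$ the ratio $\pval_i=(1-1/e)\bcov_i/\E[\cov_i]$ lies in $[0,1]$; the case $\bcov_i=0$ is handled by setting $\pval_i=0$ directly. In both cases the expected coverage delivered to bidder $i$ is $\pval_i\cdot\E[\cov_i]=(1-1/e)\bcov_i$ (using $0\cdot$ anything $=0$ in the corner case, noting that $\bcov_i=0$ forces $\E[\cov_i]=0$ by Lemma \ref{lem:rounding_approx}). For the approximation ratio, I would then simply sum: the expected welfare equals $\sum_{i\in [m]} b_i\cdot (1-1/e)\bcov_i = (1-1/e)\optlp(\bidprof)\ge (1-1/e)\welf^*(\bidprof)$, where the last inequality is Lemma \ref{lem:opt_lp}. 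This immediately yields the $(1-1/e)$-approximation guarantee.

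For monotonicity, the central step is Lemma \ref{lemma:lp_is_monotonic}, which asserts that $\bcov_i$ is non-decreasing in bidder $i$'s reported bid with $\bidprof_{-i}$ fixed. I would prove this via a standard LP-exchange argument: let $(\bX,\bC)$ and $(\bX',\bC')$ be optimal solutions of LP \eqref{opt:lp} for bids $b_i<b_i'$ respectively, with the other bids fixed. Writing down LP-optimality at each bid profile and adding the two inequalities cancels all terms involving $k\ne i$ and yields $(b_i'-b_i)(\bcov_i'-\bcov_i)\ge 0$, hence $\bcov_i'\ge \bcov_i$. Since bidder $i$'s expected coverage under Algorithm \ref{alg:lp_and_round_monotonic} is the constant multiple $(1-1/e)\bcov_i$, it is monotonic in $b_i$ in the sense of Definition \ref{def:mono}, completing the proof.

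I expect the main conceptual point --- rather than a technical obstacle --- to be explaining why the post-processing fixes the non-monotonicity exhibited in Fact \ref{thm:lp_round_not_monotone}. The failure of Algorithm \ref{alg:lp_and_round} traces to the complicated joint dependence of $\E[C_{ij}]=1-\prod_{j'\in N_i(j)}(1-\bX_{ij'})$ on the full LP solution, so even a monotonic change in $\bcov_i$ need not produce a monotonic change in $\E[\cov_i]$. The data-burning factor $\pval_i$ is precisely tuned to cancel the denominator $\E[\cov_i]$ and re-expose the linear-in-$\bcov_i$ behavior, which does inherit LP monotonicity. The only bookkeeping to be careful about is the corner case $\bcov_i=0$ (which forces $\E[\cov_i]=0$ and hence allocating nothing is consistent with monotonicity) and the fact that $\pval_i$ is computed from LP quantities rather than from realized samples of the rounding; neither introduces a real difficulty.
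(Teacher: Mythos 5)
Your proposal is correct and follows essentially the same route as the paper: the data-burning probability pins bidder $i$'s expected coverage to exactly $(1-1/e)\bcov_i$, monotonicity then reduces to monotonicity of the LP value $\bcov_i$ (your two-inequality exchange argument is a cleaner but equivalent version of the paper's contradiction proof of Lemma \ref{lemma:lp_is_monotonic} via a general lemma), and the approximation ratio follows by summing over bidders and invoking Lemma \ref{lem:opt_lp}. One harmless slip: Lemma \ref{lem:rounding_approx} only \emph{lower}-bounds $\E[\cov_i]$, so $\bcov_i=0$ does not force $\E[\cov_i]=0$; but since $\pval_i=0$ in that case, the identity $\E[\cov'_i]=(1-1/e)\bcov_i$ holds regardless.
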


We defer the formal proof of this theorem to Appendix \ref{app:proof_lp_round_monotone}. To illustrate its proof ideas, we present the two key lemmas below.  
\begin{lemma}\label{lemma:lp_is_monotonic}
The optimal LP allocation $\bcov_i$ resulting from LP \eqref{opt:lp} is monotonic for every bidder.
\end{lemma}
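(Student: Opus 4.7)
The plan is to use a standard exchange (swap) argument exploiting the fact that the LP objective is linear in the bids while the feasible region of LP \eqref{opt:lp} is independent of the bid profile. Concretely, fix $\bidprof_{-i}$ and let $b_i \ge b_i'$. Let $(\bX, \bC)$ denote an optimal LP solution under the bid profile $(b_i, \bidprof_{-i})$ and $(\bX', \bC')$ an optimal LP solution under $(b_i', \bidprof_{-i})$. Define $\bcov_k = \sum_{j\in\db} w_{kj} \bC_{kj}$ and $\bcov_k' = \sum_{j\in\db} w_{kj} \bC'_{kj}$ for each $k \in [m]$.

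The key observation is that the LP constraints \eqref{lp_const:between_0and1} and \eqref{lp_const:to_atmost_onebidder} do not depend on the reported bids. Hence $(\bX', \bC')$ is feasible for the LP instantiated with $(b_i, \bidprof_{-i})$ and vice versa. I would then invoke the two optimality inequalities
\[
b_i\, \bcov_i + \sum_{k\neq i} b_k\, \bcov_k \;\ge\; b_i\, \bcov_i' + \sum_{k\neq i} b_k\, \bcov_k',
\qquad
b_i'\, \bcov_i' + \sum_{k\neq i} b_k\, \bcov_k' \;\ge\; b_i'\, \bcov_i + \sum_{k\neq i} b_k\, \bcov_k,
\]
add them, and cancel the common terms involving $\bidprof_{-i}$. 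This yields $(b_i - b_i')(\bcov_i - \bcov_i') \ge 0$, so $b_i \ge b_i'$ implies $\bcov_i \ge \bcov_i'$, which is exactly the monotonicity claim for the LP value $\bcov_i$.

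The only subtlety is well-definedness of $\bcov_i$ when the LP admits multiple optima (in particular when $b_i = b_i'$ exactly). I would handle this by assuming a fixed deterministic tie-breaking rule in the LP solver, so that $\bcov_i$ is a function of the bid profile; with this convention $b_i = b_i'$ trivially yields equality, and the swap argument above handles the strict case. Because this step only requires linearity of the objective in the bids plus bid-independence of the feasible set, no genuine obstacle arises; the argument is essentially the same second-price-style exchange argument that underlies Myerson's monotonicity characterization, which is why this lemma will later serve as the anchor for proving Theorem \ref{thm:lp_round_monotone} via the data-burning post-processing.
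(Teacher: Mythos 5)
Your proof is correct and is essentially the same argument the paper uses: the paper routes it through a general auxiliary lemma (Lemma \ref{lemma:gen_monotonicity}) about linear objectives $\sum_k \beta_k f_k(Z)$ over a bid-independent feasible set, proved by contradiction from the same two optimality inequalities you write down and add. Your direct summation yielding $(b_i - b_i')(\bcov_i - \bcov_i') \ge 0$ is just the standard packaging of that exchange argument, and your tie-breaking remark addresses a well-definedness point the paper leaves implicit.
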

Then we show that the monotonic allocation rule also guarantees a $(1- 1/e)$ factor of the LP coverage for every bidder. Denoting the  the final coverage resulting from Algorithm \ref{alg:lp_and_round_monotonic} with a prime $'$, i.e., $\cov'_i$ we have the following lemma.  
\begin{lemma}\label{lem:expected_ceverage}
The expected coverage for any bidder $i \in [m]$ from Algorithm \ref{alg:lp_and_round_monotonic} is $\E[\cov'_i]= (1-1/e) \cdot \bcov_i$.
\end{lemma}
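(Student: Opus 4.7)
The plan is to compute $\E[\cov'_i]$ by conditioning on the independent ``burn'' coin flip introduced in Step 5 of Algorithm \ref{alg:lp_and_round_monotonic}. Observe that Step 5 is performed independently of the LP rounding in Steps 2--3 (it only depends on the \emph{values} of $\bcov_i$ and $\E[\cov_i]$, not on the realized dataset $\bundle_i$). Therefore, conditional on the rounded allocation $\bundle_i$, bidder $i$ receives coverage $\cov_i(\bundle_i)$ with probability $\pval_i$ and coverage $0$ with probability $1-\pval_i$. Taking expectations over both sources of randomness gives
\begin{equation*}
\E[\cov'_i] \;=\; \pval_i \cdot \E[\cov_i(\bundle_i)].
\end{equation*}

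Next I would split into the two cases used in the definition of $\pval_i$. If $\bcov_i > 0$, substitute $\pval_i = (1-1/e) \cdot \bcov_i / \E[\cov_i]$ directly to get $\E[\cov'_i] = (1-1/e) \cdot \bcov_i$. If $\bcov_i = 0$, then $\pval_i = 0$, so $\E[\cov'_i] = 0 = (1-1/e) \cdot \bcov_i$, and the identity still holds.

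Before concluding, I must verify that the definition of $\pval_i$ is legitimate, i.e., that $\pval_i \in [0,1]$ so that it is a valid probability. Non-negativity is immediate since $\bcov_i, \E[\cov_i] \ge 0$. The upper bound $\pval_i \le 1$ is equivalent to $\E[\cov_i] \ge (1-1/e)\cdot \bcov_i$, which is exactly the content of inequality \eqref{eq:point_cov_lb} in Lemma \ref{lem:rounding_approx}. I would also briefly remark that when $\bcov_i > 0$ the denominator $\E[\cov_i]$ is strictly positive (again by \eqref{eq:point_cov_lb}), so $\pval_i$ is well-defined.

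There is no genuine obstacle here; the only step that requires invoking prior results is checking $\pval_i \le 1$ via Lemma \ref{lem:rounding_approx}. The rest is a one-line computation using linearity of expectation and the definition of $\pval_i$.
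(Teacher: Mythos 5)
Your proof is correct and follows essentially the same route as the paper: condition on the independent data-burning coin to get $\E[\cov'_i]=\pval_i\cdot\E[\cov_i]$, substitute the definition of $\pval_i$, and justify $\pval_i\in[0,1]$ via the bound $\E[\cov_i]\ge(1-1/e)\,\bcov_i$ from Lemma \ref{lem:rounding_approx}. Your explicit treatment of the $\bcov_i=0$ case and of the well-definedness of the denominator is a minor tidying-up of the paper's argument, not a different approach.
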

We believe our $(1-1/e)$ algorithm, as well as our novel monotonic post-processing technique, are of independent interest. In fact, we show that our techniques can be modified to lead to a $(1-1/e)$ welfare maximization for another variant of coverage valuations as defined in \citet{dobzinski2006improved}\footnote{The coverage valuation defined in \citet{dobzinski2006improved} is different from ours and does not generalize our coverage valuation. Nevertheless, we show that our techniques can be applied there as well.}, we can in addition use a similar post-processing method to satisfy monotonicity for these coverage valuations, see Appendix \ref{app:alg_standard_coverage} for details.  It remains an intriguing open question whether one can design an algorithm that outperforms the $1 - 1/e$ approximation for our RAG-based coverage function. While this bound is known to be tight for general coverage functions~\cite{feige1998threshold,dobzinski2005approximation}, the hope here lies in exploiting the distance-based membership relation in the RAG setting. However, tightly pining down the best ratio is beyond the scope of this paper.
 
\section{Experiments}\label{sec:experiment}

We evaluate the quality and performance of our algorithm for computing the expected truthful auction mechanism. Specifically, we consider three datasets for the selling process. While the first dataset is synthetic, the second and third are standard  text and image datasets which we expect to have similar performance to RAG settings that use information such as text or images. 
The parameters for a given setting are $m$ which is the number of buyers, $n$ which is the size of the dataset $\db$, $\Lambda$ which is the set of possible types for each buyer, and $d$ which is the dimension of the space.
Further, we always use the Euclidean distance for all instances and datasets.  

\begin{itemize}[leftmargin=*]
    \item \textbf{Synthetic} (\textbf{\textsc{Syn}}) dataset: 
    We set $m=2$, $n=5$, $|\Lambda|=10$, $d=10$. And the database $\db$ is generated uniformly within the range of $[0, 10]$ with a shape of $n \times d$. 
    This is a relatively small toy dataset aimed to test all algorithms' performances, as we will show later, one benchmark algorithm is very computationally expensive and cannot solve large datasets efficiently. 
    \item \textbf{20\_newsgroups} (\textbf{\textsc{News}}) dataset \cite{SetFit_20_newsgroups_2025}: This is a popular collection of newsgroup documents used for text classification and machine learning tasks. The training dataset contains 11,314 newsgroup posts partitioned nearly evenly across 20 different classes. Then we generate the database $\db$ by encoding the dataset with a powerful and efficient sentence embedding model all-MiniLM-L6-v2 \cite{reimers-2019-sentence-bert}, which produces an embedding dataset with a shape of $n=11,314$ and  $d=384$. In addition, we set $m=3$ and $|\Lambda| =10$. 
    \item \textbf{CIFAR-10} (\textbf{\textsc{CIFAR}}) dataset \cite{uoft-cs_cifar10_2025}: This is a widely used benchmark in machine learning and computer vision for image classification tasks. The training dataset contains 50,000 images in 10 classes, with 5,000 images per class. We downloaded the encoded embeddings from \citet{MatthewWilletts2025Cifar10} with a shape of $n=50,000$ and $d=2048$. We also set $m=3$ and $|\Lambda| =10$. 
\end{itemize}

For all three datasets, we set $\Lambda = \{0, 0.1, \cdots, 0.9\}$ 
and then each buyer's type $\theta_i$ where $i \in [m]$ is drawn independently and uniformly from $\Lambda$. In addition, we generate $r_{ij} = \alpha_i R_{jc}$ for all $i \in [m]$ and $j \in [n]$, where $\alpha_i$'s are uniformly randomly generated  coefficients within range of $[0,1]$ and $R_{jc}$ is a class dependent average distance. Note that $\news$ and $\cifar$ datasets come with different classes, and we assume that the $\syn$ dataset only contains a single class. $R_{jc}$ is computed by first sampling 100 random points within each class (all points for $\syn$) and then computing the average Euclidean distance among those 100 points. Finally, we assume $w_{ij}$ is the same for all points in the same class, which is generated uniformly randomly within the range of $[0,1]$. Then we normalize the randomly generated numbers such that $\sum_j w_{ij} = 1, \forall i \in [m]$. 


We compare our Algorithm \ref{alg:lp_and_round} ($\algone$) and Algorithm \ref{alg:lp_and_round_monotonic} ($\algtwo$) with the following allocation algorithms: 
    (1) The widely adopted  greedy ($\rg$) allocation algorithm, as used in various recent works such as \citep{agarwal2024learning,bilmes2022submodularity,lehmann2001combinatorial}; 
    (2) The optimal approximation algorithm for submodular welfare maximization without incentive constraints, termed as continuous greedy algorithm ($\cg$) that is proposed by \citet{vondrak2008optimal}. 
These algorithms are widely recognized for their effectiveness in handling the complex valuations of item bundles, which directly aligns with our problem of valuing datasets in RAG tasks. 
We note that both $\rg$ and $\cg$ are designed for the allocation problem under the assumption that buyers’ valuations are publicly known, achieving approximation ratios of $1/2$ and $1-1/e$, respectively \citep{agarwal2024learning,bilmes2022submodularity,lehmann2001combinatorial,vondrak2008optimal}. However, \textit{neither} algorithm is guaranteed to satisfy \textit{incentive compatibility}, since they rely on the assumption of public value information.
It’s also worth mentioning that \citet{dobzinski2005approximation} proposed an incentive-compatible allocation algorithm based on matching techniques, though it was developed for a different problem setting.
Their algorithm focuses on the scenario where the number of items is smaller than the number of buyers. This algorithm achieves an approximation ratio of $1/\sqrt{n}$, is monotonic, and assigns one item to one buyer. However, it is not suitable for our scenario, where the number of items is much larger than the number of buyers, and is therefore excluded from the comparison.


We compare all algorithms' performance in terms of the welfare, measured as the total utility achieved by all buyers, as well as the runtime.

First, we start by comparing the algorithms' performance on welfare. For $\algone$ and $\algtwo$, we run the rounding process (i.e., Lines 2-3 of Algorithm \ref{alg:lp_and_round} and Lines 4-5 of Algorithm \ref{alg:lp_and_round_monotonic}) for $150,000$ trials to get an estimation of the expected welfare. In addition, even though the optimal welfare of the allocation problem is NP-Hard to compute, we have the objective value from LP \eqref{opt:lp} as an upper bound on the optimal welfare. As a result, we compare each algorithm's approximation ratio with respect to the objective value of LP \eqref{opt:lp}. 

\begin{table}[htb]
    \centering
    \begin{tabular}{|c|c|c|c|}
    \hline
      Algorithm  & $\syn$ & $\news$ & $\cifar$\\
    \hline
         $\algone$& \makecell{$0.9924$ \\ $\pm 0.037$} & \makecell{$\mathbf{0.9947}$ \\ $\pm 0.014$}  & \makecell{$\mathbf{0.9982}$ \\ $\pm 0.004$}\\
    \hline
        $\algtwo$ & \makecell{$0.6319$ \\ $\pm 0.001$} & \makecell{$0.6325$ \\ $\pm 0.002$} & \makecell{$0.6322$ \\ $\pm 0.001$} \\
    \hline
        $\rg$ & \makecell{$\mathbf{0.9945}$ \\ $\pm 0.023$} & \makecell{$0.9908$ \\ $\pm 0.014$}  & \makecell{$0.9871$ \\ $\pm 0.014$} \\
    \hline
        $\cg$ & \makecell{$0.915$ \\ $\pm 0.136$} & N/A & N/A \\
    \hline
    \end{tabular}
    \caption{Welfare approximation ratio relative to objective, averaged over 50 instances with standard deviation.}
    \label{tab:welfare}
\end{table}

From the experimental results in Table \ref{tab:welfare}, we can see that $\rg$ performs slightly better than our $\algone$ on the small $\syn$ dataset, but is outperformed by $\algone$ when the size of dataset gets larger. In addition, we note that our $\algtwo$ indeed achieves a tight $1 - 1/e$ approximation ratio, as demonstrated by the experimental results. Finally, we would like to mention that $\cg$ has a runtime complexity of $O\big((mn)^8\big)$ \cite{vondrak2008optimal}, making it impractical for larger datasets like $\news$ and $\cifar$. Consequently, we exclude $\cg$ from welfare comparisons in those two datasets and the following runtime comparison. 

\begin{figure}[htb]
    \centering
    \includegraphics[width=0.6\columnwidth]{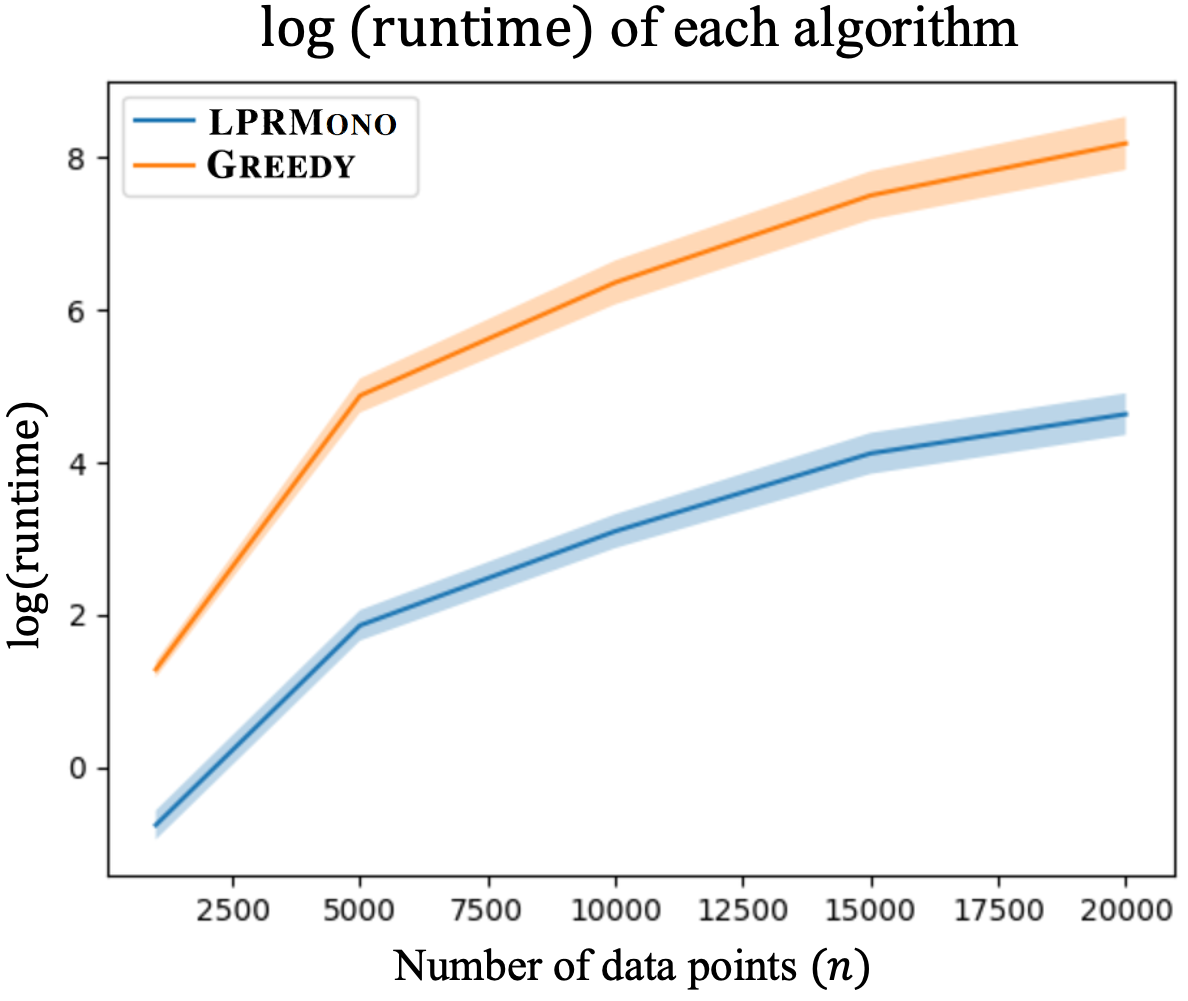}
    \caption{Runtime in seconds of each algorithm.}
    \label{fig:runtime}
\end{figure}


The runtime results is shown by Figure \ref{fig:runtime}. The difference between $\algone$ and $\algtwo$ is negligible so only $\algtwo$ is presented.  We vary the number of data points $n$ and evaluate the runtime of each algorithm. All experiments are repeated on 50 random instances on $\cifar$, and the plotted curves are averaged across the runs with error bars for 95$\%$ confidence level. We can see that the runtime of $\rg$ is significantly larger compared to our algorithms. 
It is also worth noting that besides the runtime advantage of our algorithms, $\algtwo$ is incentive compatible. 
Finally, we show that $\rg$ is not monotone with the following Example. 

\begin{example}[Counterexample Demonstrating the Non-monotonicity of $\rg$]\label{tab:not_monotone_example}
\textit{Consider a $\db$ with three data points on a line $A - B - C$, where the distances between $(A,B)$, $(A,C)$, and $(B,C)$ are $1, 2, 1$, respectively. We have three bidders all having the same radius of $r_{ij}=1$ and weight of $w_{ij}=\frac{1}{3}$ for all points. Further,  the bidders have $\theta_1=1$, $\theta_2=\frac{3}{4}$, and $\theta_3=\frac{1}{2}$. The greedy algorithm will loop over the data points starting from point $A$ then $B$ and finally $C$. At each iteration $t$ the marginal gain $v_i(S_{i,t}\cup  \{j\})-v_i(S_{i,t})$ is calculated for each bidder $i$ where $S_{i,t}$ is the set of points given to bidder $i$ before the $t^{th}$ iteration is calculated and the point $j$ is the one considered at iteration $t$. Point $j$ is then allocated to the bidder with the maximum gain.} 

\textit{Now, consider the truthful bid profile of $\bidprof=(b_1=1,b_2=\frac{3}{4},b_3=\frac{1}{2})$. By running the greedy algorithm we can see that the solution would result in bidder $1$ obtaining only point $A$ and hence a coverage of $\cov_1=\frac{2}{3}$. On the other hand, if bidder 1 deviates to $b'_1=\frac{7}{10}$ (with the other bids fixed), then he would instead obtain point $B$ leading to $\cov'_1=1>\cov_1=\frac{2}{3}$. This clearly violates monotonicity. }
\end{example}

It is challenging to theoretically demonstrate that the $\cg$ algorithm is not monotonic due to its complicated randomized nature which would require a lengthy manual simulation of the algorithm. Consequently, we assessed its monotonicity on a randomly generated instance from the $\syn$ dataset. Given the stochastic nature of randomized allocation algorithms, we estimated the expected allocation coverage $\mathbb{E}[\cov_i(\bundle)]$ for every buyer $i$ at a given bid value by running the algorithm 150,000 times and averaging the outcomes.\footnote{Note that since the algorithm is randomized, finding the exact expected coverage value would require infinitely many samples.} 
Even though 150,000 is a large number, our estimation would still have some minor deviations. Based on standard concentration inequalities \cite{hoeffding1994probability}, it is straightforward to show that using 150,000 runs makes our estimates accurate to a deviation of at most $\epsilon = 0.01$ with probability at least $99.99\%$. Let $\hat{\cov}_i(\bundle)$ and $\hat{\cov}_i(\bundle')$ denote our estimates of the expected coverage for buyer $i$ at bids $b_i$ and $b_i'$, respectively. Then, if we have $b_i \ge b_i'$ and obtain estimates such that  $\hat{\cov}_i(\bundle)+\epsilon < \hat{\cov}_i(\bundle') -\epsilon$, then $\cg$ is not monotonic with probability at least $99.99\%$. This in fact happens on at least one specific instance from $\syn$, the full details of this instance can be found in Appendix \ref{app_sec:counterexample}.

\bibliography{refer}
\bibliographystyle{plainnat}

\newpage
\appendix
\newpage 
\section{Useful Facts and Lemmas}
\begin{fact}\label{fact:lower_bound_on_exp}
$\forall x \in [0,1]$ we have 
\begin{align}
    1-e^{-x} \ge (1-1/e) x
\end{align}
\end{fact}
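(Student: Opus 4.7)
The plan is to observe that $g(x) := 1 - e^{-x}$ is concave on $[0,1]$ and then use concavity to compare $g$ to its chord between the endpoints $x=0$ and $x=1$.

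First I would check the endpoint values. At $x=0$ we have $g(0) = 0 = (1-1/e)\cdot 0$, and at $x=1$ we have $g(1) = 1 - 1/e = (1-1/e)\cdot 1$, so equality holds at both endpoints. Hence the right-hand side $(1-1/e)x$ is exactly the linear interpolation (chord) of $g$ between $x=0$ and $x=1$.

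Next I would verify concavity of $g$ on $[0,1]$ by a one-line second-derivative computation: $g''(x) = -e^{-x} < 0$ for all $x \in \mathbb{R}$. By the standard characterization of concave functions, a concave function lies above any of its chords on the interval spanned by the chord's endpoints. Applying this with the chord from $(0,0)$ to $(1, 1-1/e)$ gives $g(x) \ge (1-1/e)x$ for all $x \in [0,1]$, which is exactly the claimed inequality.

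There is essentially no obstacle here; the only thing to be careful about is the direction of the inequality (concave functions lie above their chords, convex lie below), and to note that this bound would fail outside $[0,1]$ (e.g.\ for $x$ slightly greater than $1$), so the restriction $x \in [0,1]$ in the statement is essential. An equivalent packaging would be to set $f(x) = g(x) - (1-1/e)x$, note $f(0) = f(1) = 0$ and $f$ is concave, and conclude $f \ge 0$ on $[0,1]$; either presentation works.
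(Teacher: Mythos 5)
Your proof is correct and is essentially the paper's argument in different clothing: both check that equality holds at $x=0$ and $x=1$ and then use the shape of $1-e^{-x}$ to conclude the function stays above the chord. The paper phrases this via a first-derivative critical-point analysis of $f(x)=1-e^{-x}-(1-1/e)x$ (it increases from $0$, then decreases back to $0$), while you invoke concavity via $g''(x)=-e^{-x}<0$; the two are interchangeable one-line calculus arguments and your version is, if anything, slightly cleaner.
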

\begin{proof}
Following a standard approach, we define the function $f(x)=1-e^{-x}-(1-1/e) x$. Therefore, we only need to show that $f(x)\ge 0 \ \forall x \in [0,1]$. Note that $f(0)=f(1)=0$. Further, the derivative is $f'(x)=e^{-x}-(1-1/e)$. Note that $f'(x)=0$ at $x=\ln \frac{e}{e-1} \in [0,1]$. Further, by the form of the derivative $f'(x)=e^{-x}-(1-1/e)$, it is clear that in the interval $[0,1]$ the function $f$ increases from $0$ at $x=0$ until $\ln \frac{e}{e-1}$ and then decreases until it has a value of $0$ again at $x=1$. Therefore, $f(x)\ge 0 \ \forall x \in [0,1]$. 
\end{proof}

\begin{lemma}\label{lemma:gen_monotonicity}
For any function $f$ that can be written as
\begin{align}
    f(Z) = \beta_1 f_1(Z) + \beta_2 f_2(Z) + \dots + \beta_m f_m(Z)  \quad , \quad Z \in S
\end{align}
where $S$ is some set and $\beta_i > 0 \ , \forall i \in [m]$, the value of $f_i(.)$ at any optimal solution is monotonic in $\beta_i$. I.e., let $Z^*$ be the optimal solution for $f$ given $(\beta_i,\beta_{-i})$ and let  $Z'^*$ be the optimal solution for for $f$ given $(\beta_i',\beta_{-i})$, then if $\beta_i > \beta'_i$ then $f_i(Z^*) \ge f_i(Z'^*)$. 
\end{lemma}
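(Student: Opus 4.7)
The plan is to prove Lemma \ref{lemma:gen_monotonicity} via a standard two-line exchange argument that is ubiquitous in mechanism design (it is essentially the same argument used to establish Myerson-style monotonicity of welfare-maximizing allocations). The key observation is that changing only the coefficient $\beta_i$ leaves the other terms $\beta_j f_j(Z)$ (for $j\neq i$) unchanged, so when we write the two optimality inequalities at the two different coefficient profiles, the ``other'' terms can be made to cancel.

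Concretely, I would proceed as follows. First, since $Z^*$ is optimal for $f$ under $(\beta_i,\beta_{-i})$, we have
\begin{equation*}
\beta_i f_i(Z^*) + \sum_{j\neq i}\beta_j f_j(Z^*) \;\geq\; \beta_i f_i(Z'^*) + \sum_{j\neq i}\beta_j f_j(Z'^*).
\end{equation*}
Second, since $Z'^*$ is optimal under $(\beta_i',\beta_{-i})$, we have
\begin{equation*}
\beta_i' f_i(Z'^*) + \sum_{j\neq i}\beta_j f_j(Z'^*) \;\geq\; \beta_i' f_i(Z^*) + \sum_{j\neq i}\beta_j f_j(Z^*).
\end{equation*}
Adding these two inequalities, the $\sum_{j\neq i}\beta_j f_j(\cdot)$ contributions on the two sides cancel, leaving
\begin{equation*}
(\beta_i - \beta_i')\bigl(f_i(Z^*) - f_i(Z'^*)\bigr) \;\geq\; 0.
\end{equation*}
Since we are given $\beta_i > \beta_i'$, i.e.\ $\beta_i-\beta_i' > 0$, we may divide through to conclude $f_i(Z^*) \geq f_i(Z'^*)$, which is exactly the claimed monotonicity.

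I do not anticipate any real obstacle: the argument is purely combinatorial, does not require continuity, differentiability, or uniqueness of the maximizers, and works for an arbitrary domain $S$ (the optimality inequalities suffice). The only minor subtleties to be careful about are (i) handling the possibility that multiple maximizers exist (the argument above is valid for any selection of optimizers $Z^*, Z'^*$, so this is not an issue), and (ii) making sure we only use strict positivity of $\beta_i-\beta_i'$ at the very last step so that the inequality can be divided through without reversing. Both are handled by the derivation above, and the lemma will then be applied in the proof of Lemma \ref{lemma:lp_is_monotonic} with $\beta_i = \theta_i$ (the bid of bidder $i$) and $f_i(Z) = \sum_{j\in\db} w_{ij}\bC_{ij}$ (the LP coverage received by bidder $i$).
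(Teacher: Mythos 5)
Your proof is correct and rests on exactly the same two optimality inequalities as the paper's proof; the paper merely packages the argument as a proof by contradiction (assuming $f_i(Z^*) < f_i(Z'^*)$ and deriving that $Z^*$ would then strictly beat $Z'^*$ under $(\beta_i',\beta_{-i})$), whereas you add the two inequalities directly so that the $\sum_{j\neq i}\beta_j f_j(\cdot)$ terms cancel. Your direct version is, if anything, cleaner and avoids the paper's unnecessary division by $\beta_i$.
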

\begin{proof}
The proof is by contradiction, so we will assume that $f_i(Z^*) < f_i(Z'^*)$ with $\beta_i > \beta'_i$. Note first the welfare under different bids can be written as 
\begin{align}
    f(Z^*) & = \beta_i \cdot f_i(Z^*) +  \underbrace{\sum_{k\in [m], k \neq i} \beta_k \cdot f_k(Z^*)}_{V} \\
    f(Z'^*) & = \beta'_i \cdot f_i(Z'^*) +  \underbrace{\sum_{k\in [m], k \neq i} \beta_k \cdot f_k(Z'^*)}_{V'} 
\end{align}

We now have the following claim 
\begin{claim}\label{cl:second}
$f_i(Z'^*)-f_i(Z^*) \leq \frac{V-V'}{\beta_i}$.  
\end{claim}
\begin{proof}
Since $Z^*$ is an optimal solution then it follows that by using the solution $Z'^*$ we would not get a higher welfare under $\beta_i$. Therefore, we have 
\begin{align}
    \beta_i \cdot f_i(Z^*) + V \ge \beta_i \cdot f_i(Z'^*) + V'
\end{align}
By re-arranging the terms the inequality follows. 
\end{proof}
Now we reach the following claim
\begin{claim}\label{cl:last}
$\beta_i' \cdot f_i(Z^*) + V > \beta_i' \cdot f_i(Z'^*) + V'$.
\end{claim}
\begin{proof}
By taking the difference we get 
\begin{align}
    \beta_i' \cdot f_i(Z^*) + V - \Big( \beta_i' \cdot f_i(Z'^*) + V' \Big) & = \beta_i' \cdot  \Big(f_i(Z^*)-f_i(Z'^*)\Big) + V -V' \\ 
    & \ge -\beta_i' \cdot \Big( \frac{V-V'}{\beta_i} \Big) + V-V'  \quad \quad \text{(This follows by Claim \ref{cl:second})} \\ 
    & =\Big(  V-V' \Big) \cdot \Big( 1 - \frac{\beta'_i}{\beta_i} \Big) \\
    & > 0 
\end{align}
Where the last step follows since by Claim \ref{cl:last} we have $V-V' >0$ since by assumption $f_i(Z'^*)-f_i(Z^*) >0$ and $\beta_i>0$. Further, since $\beta'_i <\beta_i$ then $1 - \frac{\beta'_i}{\beta_i} >0$. Therefore, the multiplication of two positive number results in a positive. 
\end{proof}
Looking at the last Claim \ref{cl:last} we see that using the solutuion $Z^*$ for $(\beta'_i,\beta_{-i})$ would result in a strictly higher value for $f(.)$ than using $Z'^*$, which contradicts the assumption that $Z'^*$ is an optimal solution for $(\beta'_i,\beta_{-i})$. Therefore, we must have $f_i(Z^*) \ge f_i(Z'^*)$. 
\end{proof}

\section{Omitted Proofs of Section \ref{sec:new_algorithmic}} 

\subsection{Proof of  Proposition  \ref{th:demand_oracle_is_np_hard}  } \label{app:proof_demand_oracle_is_np_hard}
The reduction is from the dominating set problem. In the dominating set decision problem, we are given a graph $G = (V, E)$ and an integer $k$ and we should decide if there exists a subset of vertices $V' \subset V$ such that $|V'|\leq k$ and every vertex in $V$ is either in $V'$ or has an edge connected to $V'$. The problem is known to be NP-complete \cite{garey1979computers}. 

The vertices will be the points in the database $\db$ and the distance between any pair of points (vertices) will be the path distance. We let a bidder $i$ have $r_{ij}=1 \ , \forall j \in \db$ and set $b_i=1$. We set all prices for all points to $p_j=p \ ,\forall j \in \db$ where $0 < p < \frac{1}{n}$. Note that for any non-trivial instance we have $n>k$ and therefore $\frac{1}{n} < \frac{1}{k}$. Further, the weights are the same for all points, i.e., we have $w_{ij}=\frac{1}{n} \ , \forall j \in \db$. We will prove that the graph $G$ has a dominating set of size at most $k$ if and only if there exists a subset $V' \subset V$ such that $u_i(V')\ge 1-pk$. 

The first direction is easy to show. In particular, suppose that $V'$ is a dominating set of size at most $k$ then clearly $u_i(V') \ge 1-pk$. 

For the other direction, we will show that if $G$ does not have a dominating set of size at most $k$ then there exists no subset $V'$ such that $u_i(V') \ge 1-pk$. First, if $|V'|>k$ then $u_i(S') \leq 1 - p \cdot (k+1) < 1- pk$. Therefore, we must have $|V'| = k-\ell$ where $\ell$ is a non-negative integer. We will argue that the size of the covered subset, denoted by $N(V')$, satisfies $N(V') \leq n-\ell-1$. Otherwise, $V'$ can be used to construct a dominating set of size at most $k$. To see that note, we can add the $\ell$ many vertices not included in $N(V')$ and we would cover the full graph using $k-\ell+\ell=k$ many vertices which is a dominating set of size at most $k$. Thus, we must $|V'| = k-\ell$ and $N(V') \leq n-\ell-1$. Now we upper bound the utility
\begin{align*}
    u_i(V') & \leq \frac{n-\ell-1}{n} - p \cdot (k-\ell) \\ 
          & = 1 - pk + p \ell -\frac{\ell+1}{n} \\ 
          & = 1 - pk  -\frac{1}{n} + \ell (p-\frac{1}{n})  \\ 
          & \leq 1-pk  -\frac{1}{n}  \quad \text{(since $(p-\frac{1}{n})<0$ as $p<\frac{1}{n}$)}  \\ 
          & < 1-pk  
\end{align*}

\subsection{Proof of the First Claim in Theorem \ref{thm:welfare-hard} }  \label{app:proof_welfare_NP_hard_m2}
The reduction from dominating set to our problem is as follows. We will have two bidders ($m=2$). For simpliclity, we set the bid values to $b_1 =b_2 =1$. The weights for all points and bidders are set to the same value, it follows that $w_{ij}=\frac{1}{n} \ , \forall j \in \db, \forall i\in [m]$. Each bidder has the same radius value across all points. In particular, we set $r_1=1$ and $r_2=0$. Given the graph $G = (V, E)$  from the dominating set problem, each vertex $v \in V$ will represent a point in the database $\db$. For any two points (vertices) $v,v' \in \db$ we set $d(v,v')$ to the path distance between $v, v' \in V$. Therefore, $d(v,v')=0$ only if $v=v'$. Moreover, $d(v,v')=1$ only if there exists an edge between $v$ and $v'$. For all other cases, $d(v,v') >1$. 

Now for a subset of $V' \subset V$ define $N(V')$ to be the number the vertices of $V'$ union the vertices that have an edge connected to a vertex in $V'$. It follows that if we give vertices $V_1$ and $V_2$ to the first and second bidder, respectively, then the welfare would be: 
\begin{align}
    \welf &= v_1(V_1) + v_2(V_2) \nonumber \\ 
          & = \frac{N(V_1)+|V-V_1|}{n} \label{eq:domset_plug}
\end{align}
Therefore, to maximize utility we solve the following maximization:
\begin{align}
    \max\limits_{V_1 \subset V} \ \ \frac{N(V_1)+|V-V_1|}{n}
\end{align}
The following claim proves the theorem.
\begin{claim}\label{claim:dominating_set}
$G$ has a dominating set of size less than or equal to $k$ if and only if the optimal welfare is at least $\welf^* \ge 2-\frac{k}{n}$.  
\end{claim}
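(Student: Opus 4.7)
The plan is to derive the claim by first simplifying the welfare expression using the structure of the two bidders' radii, and then reducing the approximation inequality directly to the dominating-set condition. My first step is to observe that given any $V_1 \subset V$ allocated to bidder $1$, the best allocation for bidder $2$ is $V \setminus V_1$: since $r_2 = 0$, bidder $2$'s coverage is $|V_2|/n$, which is linear in $|V_2|$, so bidder $2$ should receive every point not given to bidder $1$. This justifies reducing the welfare maximization to choosing $V_1$ alone and yields \eqref{eq:domset_plug}.

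Next, I would note that $\cov_1(V_1) = N(V_1)/n$ because $r_1 = 1$ means a point $j$ is covered iff $j \in V_1$ or $j$ is adjacent to some vertex in $V_1$. Hence
\[
\welf \;=\; \frac{N(V_1) + (n - |V_1|)}{n} \;=\; 1 \;+\; \frac{N(V_1) - |V_1|}{n},
\]
so $\welf^* \ge 2 - k/n$ is equivalent to $N(V_1^*) - |V_1^*| \ge n - k$, i.e., $n - N(V_1^*) \le k - |V_1^*|$. In particular, combining this with $N(V_1^*) \le n$ yields $|V_1^*| \le k$ automatically.

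For the forward direction, given a dominating set $D$ with $|D| \le k$, I set $V_1 = D$, so $N(V_1) = n$, and the welfare is $(2n - |D|)/n \ge 2 - k/n$, establishing optimality at or above the threshold. For the converse, I consider an optimal $V_1^*$ achieving $\welf^* \ge 2 - k/n$ (paired with $V_2^* = V \setminus V_1^*$) and let $U = V \setminus N(V_1^*)$ be the uncovered vertices. The key construction is $D := V_1^* \cup U$: every vertex of $V$ is either in $N(V_1^*)$ (hence dominated by $V_1^* \subseteq D$) or in $U \subseteq D$, so $D$ is a dominating set, and $|D| \le |V_1^*| + |U| \le |V_1^*| + (k - |V_1^*|) = k$.

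The only subtle step is the reverse direction: one must convert a welfare guarantee into a concrete dominating set. The trick is to patch the uncovered vertices $U$ directly into $V_1^*$, and the inequality $|U| \le k - |V_1^*|$ is exactly what makes the total size fit within $k$. Once the welfare identity above is in place, both directions are short arithmetic arguments, so I do not anticipate any further obstacles beyond correctly setting up the bookkeeping for $N(V_1)$ and $U$.
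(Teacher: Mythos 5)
Your proposal is correct and follows essentially the same route as the paper: the same welfare identity $\welf = (N(V_1) + n - |V_1|)/n$, the same forward direction via setting $V_1$ to the dominating set, and the same converse trick of patching the uncovered vertices into $V_1^*$ (the paper phrases this with $|V_1^*| = k-\ell$ and a contradiction argument showing $N(V_1^*) \ge n-\ell$, which is just your inequality $|U| \le k - |V_1^*|$ in different bookkeeping). Your direct rearrangement is slightly cleaner but not a different proof.
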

\begin{proof}
Suppose $G$ has a dominating set of size $\leq k$, then $\welf^* \ge 2-\frac{k}{n}$. To see that, set $V_1$ equal to the dominating set. Therefore, it follows that $N(V_1)=n$ and $|V-V_1|\ge n-k$ so by plugging these values in \eqref{eq:domset_plug} we get a welfare of at least $2-\frac{k}{n}$. 

For the other direction. We first show that if a solution has a welfare $\welf \ge 2-\frac{k}{n}$ then $|V_1|\leq k$. This is the case since if $|V_1|>k$ then $|V-V_1|< n-k$ and therefore using \eqref{eq:domset_plug} we get $\welf < 2-\frac{k}{n}$.

Now we argue that if $|V_1|=k-\ell \leq k$  where $\ell \ge 0$ is a non-negative integer then $N(V_1) \ge n-\ell$. Suppose not, i.e.,  $N(V_1) \leq n-\ell-1$ then we have 
\begin{align*}
    \welf &= \frac{N(V_1) + |V-V_1|}{n} \\
          & \leq \frac{n-\ell-1 + n - (k-\ell)}{n} \\
          & = \frac{2n - k -1}{n} \\
          & < 2-\frac{k}{n}
\end{align*}
Therefore, it must be that if $|V_1|=k-\ell$ then $N(V_1) \ge n-\ell$, but this implies that there exists a dominating set of size $k$. To see that, note that we can construct a dominating set from $V_1$ by adding (at most) the remaining $\ell$ many uncovered vertices, this leads to $N(V_1)+\ell = n-\ell+\ell=n$ using only $k-\ell+\ell=k$ many vertices which is a dominating set of size at most $k$. 
\end{proof}

\subsection{Proof of Second Claim in Theorem \ref{thm:welfare-hard}}\label{app:proof_welfare_NP_hard_m3}
\begin{proof}
The reduction is from the domatic number problem. First, given a graph $G=(V,E)$ a domatic partition of the graph is a set of disjoint vertices $V_1,\dots,V_D$ such that for each $i \in \{1,\dots,D\}$ the subset of vertices $V_i$ is a dominating set of $G$. In the decision version of the domatic number problem we are given a graph $G=(V,E)$ and an integer $k$ and we should decide if there is a domatic partition of $G$ into at least $k$ many subsets. In \citet{kaplan1994domatic} it is shown that the domatic number problem is NP-complete for $k \ge 3$.

We will now reduce any instance of the domatic number problem for $k \ge 3$ to our welfare maximization problem. First, note for a given value of $k$ if the graph has a domatic partition for $k' > k$ then it has a domatic partition of size $k$ this follows since given a domatic partition $\{V_1,\dots,V_{k'}\}$ we can reduce the number of subsets (possibly recursively) until we have a total of $k$ many subsets by simply merging two subsets since the union of two dominating sets is also a dominating set.  

The reduction is as follows. We will have $k$ bidders and for simplicity, we will set $b_1=b_2=\dots=b_k=b>0$. Given the graph $G=(V,E)$ the vertices $V$ will represent the database $\db$. All weights $w_{ij}=\frac{1}{n} \ , \forall j \in \db, \forall i \in [k]$. The distance between two points (vertices) will equal the path distance, hence vertices that have an edge between them are at a distance of $1$. For our coverage function, we set the radius $r_{ij}=r=1 \ , \forall j \in \db, \forall i \in [k]$. Therefore, if a bidder receives a subset of points $V' \subset V$ then his utility would be $v_i(V')=b \frac{N(V')}{n}$ where $N(V')$ is the total number of vertices that are either in $V'$ or adjacent to a vertex in $V'$. We will now show that there is a domatic partition into $k$ subsets if and only if the welfare is at least $b\cdot k$ in our problem. 

First, suppose that there exists a domatic partition into $k$ subsets $V_1, V_2, \dots, V_k$ then it follows since each subset is dominating that the welfare is at least $\welf = v_1(V_1)+v_2(V_2)+\dots+v_k(V_k)= b \cdot (\frac{n}{n}+\frac{n}{n}+\dots+\frac{n}{n}) = b \cdot k$. 

For the other direction, suppose that we have found an allocation with $\welf \ge bk$. By definition of the valuation function, it follows that for any bidder and any allocation $v_i(V') = b \frac{N(V')}{n} \leq b \frac{n}{n}=b$. But since $w \ge b \cdot k$ it follows that each bidder covers the entire set of points (vertices). Further, since we allocate each point (vertex) to only one bidder it follows that we have a partition of the graph into $k$ disjoint subset of vertices each covering the entire set of points, i.e., a partition into $k$ dominating sets.   
\end{proof}

\subsection{Proof of Lemma \ref{lem:opt_lp}}\label{app:proof_opt_lp}
\begin{proof}
To prove the lemma, we will show that an optimal solution is feasible and that objective \eqref{lp:obj} gives its welfare value. First, note that an optimal solution which consists of a collection of datasets $\{\bundle^*_1,\dots,\bundle^*_m\}$ corresponds to a set of integral assignments $X_{ij}^*$ which for a point $j\in \db$ and bidder $i \in [m]$ would equal 1 only if bidder $i$ received point $j$ and would be zero otherwise. 

Since $X_{ij}^*$ is an integral assignment ($X_{ij}^* \in \{0,1\}$) it follows that $0 \leq X_{ij}^*\leq 1$ satisfying constraint \eqref{lp_const:between_0and1}. Further, since no point can be assigned to more than one bidder we have $\sum_{i \in [m]} X_{ij}^* \leq 1$ satisfying constraint \eqref{lp_const:to_atmost_onebidder}. Moreover, the values $X_{ij}^*$ would induce a collection of coverage values $C_{ij}^*$. Since $C_{ij}^* \in \{0,1\}$ then we have $0 \leq C_{ij}^*\leq 1$ satisfying \eqref{lp_const:between_0and1}. Finally, a point $j$ is covered if and only if at least one point $j' \in N_i(j)$ is given to bidder $i$. If this is the case for a point $j$ then $\sum_{j' \in N_i(j)} X^*_{ij'}  \ge 1$ and $C^*_{ij}=1$. If no point in the neighborhood of $j$ is allocated to bidder $i$ then we must have $\sum_{j' \in N_i(j)} X^*_{ij'}=0$ and $C^*_{ij}=0$. In both cases, we have $C^*_{ij} \leq \sum_{j' \in N_i(j)} X^*_{ij'}$ satisfying constraint \eqref{lp_const:between_0and1}. From the above, it follows that $\sum_{i \in [m]} \theta_i \cdot \Big( \sum_{j \in  \db} w_{ij} \cdot C^*_{ij} \Big) = \welf^*(\bidprof)$. 

Since the LP also enables fractional assignments in $[0,1]$ not just integral assignments in $\{0,1\}$ for $\bX_{ij}$ and $\bC_{ij}$ it follows that $\optlp(\bidprof) \ge \welf^*(\bidprof)$.  
\end{proof}

\subsection{Proof of Lemma \ref{lem:rounding_approx}}\label{app:proof_rounding_approx}
\begin{proof}
\begin{align}
    \E(C_{ij}=1)  &= 1-\Prob(C_{ij}=0) \\ 
                    & = 1- \Prob(\forall j' \in N_i(j), X_{ij'}=0) \\ 
                    & = 1-\prod_{j' \in N_i(j)} \Prob( X_{ij'}=0) \quad \quad \text{(Since the rounding is independent)} \nonumber \\
                    & = 1 - \prod_{j' \in N_i(j)} (1-\bX_{ij'}) \label{eq:proved_1} \\ 
                    & \ge 1 - \prod_{j' \in N_i(j)} \exp(-\bX_{ij'})  \quad \quad \text{(Since $1-x\leq e^{-x}$)} \nonumber \\
                    & = 1 - \exp(-\sum_{j' \in N_i(j)} \bX_{ij'}) \\ 
                    & \ge 1 - \exp( -\bC_{ij}) \quad \quad \text{(Since $\bC_{ij} \leq \sum_{j' \in N_i(j)} \bX_{ij'}$ by constraint \eqref{lp_const:between_0and1})} \nonumber\\
                    & \ge (1-1/e) \bC_{ij} \label{eq:proved_2} \quad \quad \text{(By Fact \ref{fact:lower_bound_on_exp} and \eqref{lp_const:between_0and1} we have $0 \leq \bC_{ij}\leq 1$)} \nonumber
\end{align}
Line \eqref{eq:proved_1} proves Equation \eqref{eq:point_cov_exact_form}. The second point follows immediately by definition \eqref{eq:cov_i_def} and Line \eqref{eq:proved_2}. 
\end{proof}

\subsection{Proof of Theorem \ref{th:lp_round_approx}}
\begin{proof}
The resulting welfare is 
\begin{align}
    \E\Big[\sum_{i \in [m]} \theta_i \cdot  \cov_i \Big] & = \sum_{i \in [m]} \theta_i \cdot  \E[\cov_i]  \\ 
    & \ge (1-1/e) \sum_{i \in [m]} \theta_i \cdot \bcov_i \\
    & =   (1-1/e) \optlp(\bidprof)  \\ 
    & \ge (1-1/e) \welf^*(\bidprof)
\end{align}
\end{proof}

\section{Omitted Proofs of Section \ref{sec:new_ic_theory}}
\subsection{Proof of Lemma \ref{th:gen_myerson_lemma}}\label{app:proof_gen_myerson_lemma}
\begin{proof}
We will focus on a single bidder $i \in  [m]$ and hold the other buyer's bids $\bidprof_{-i}$ fixed. Bidder $i$ will receive datasets $\bundle$ and  $\bundle'$ under bid profiles $(b,\bidprof_{-i})$ and $(b',\bidprof_{-i})$, respectively. Note that we have dropped the subscript $i$ in  $\bundle$,$\bundle'$,$b$, and $b'$ for ease of notation. We also overload the notation  and have $p(b)=p(\bundle)$ and $p(b')=p(\bundle')$.

For the mechanism to be incentive compatible the following conditions must hold
\begin{align}
 b \cdot \E[\cov_i(\bundle)] - \E[p(b)] & \ge b \cdot \E[\cov_i(\bundle')] - \E[p(b')] \label{eq:ml_1}\\ 
 b' \cdot \E[\cov_i(\bundle')] - \E[p(b')] & \ge b' \cdot \E[\cov_i(\bundle)] - \E[p(b)] \label{eq:ml_2}
\end{align}
From \eqref{eq:ml_1} and \eqref{eq:ml_2} the following holds
\begin{align}
    (b-b') \cdot (\E[\cov_i(\bundle)] -\E[\cov_i(\bundle')]) \ge 0 \label{eq:ml_3}
\end{align}
The above inequality immediately implies that the allocation must be monotonic. I.e., $\E[\cov_i(\bundle)] \ge \E[\cov_i(\bundle')$ if and only if $b \ge b'$. This proves the first part of the theorem.  

To find the payment, first note that \eqref{eq:ml_1} and \eqref{eq:ml_2} imply the following inequalities:
\begin{align}
     b' \cdot \Big( \E[\cov_i(\bundle)] - \E[\cov_i(\bundle')] \Big)  
     \leq \E[p(b)] - \E[p(b')] \label{eq:ml_p1}\\ 
     \leq b \cdot \Big( \E[\cov_i(\bundle)] - \E[\cov_i(\bundle')] \Big) \label{eq:ml_p2}
\end{align}
We will now set the payment rule. Note that we will follow a deterministic payment rule, thus the payment value will always equal its expectation. The possible bid values belong to the set $\bidset=\{\lambda_0,\lambda_2,\dots,\lambda_{|\bidset|}\}$. 
%
We denote by $\bundle_{t}$ the dataset which bidder $i$ receives if he bids with value $\lambda_t$ for some $t \in \{0,1,2,\dots,|\bidset|\}$ with the other bidders $\bidprof_{-i}$ held fixed. Further, we set $\bundle_{0}=\emptyset$ hence $\E[\cov_i(\bundle_{0})]=0$ and we also set $p(0)=0$. At a bid value of $b$ the payment for bidder $i$ is the following:
\begin{align} \label{eq:payment_rule}
    p(b) = \sum_{\substack{t \in  \{0,1,2,\dots,|\bidset|\}: \\ \lambda_t < b}} \lambda_t \cdot (\E[\cov_i(\bundle_{t+1})]-\E[\cov_i(\bundle_{t})])
\end{align}
Without loss of generality, we assume that for the two bid values $b$ and $b'$ we have $b>b'$. We will now prove that Inequalities \eqref{eq:ml_p1} and \eqref{eq:ml_p2} are satisfied by this payment. 
\begin{align}
& b' \cdot \Big( \E[\cov_i(\bundle)] - \E[\cov_i(\bundle')] \Big)  & \\
& = b'  \Big( \sum_{\substack{t \in  \{0,1,2,\dots,|\bidset|\}: \\ b'\leq \lambda_t < b}}  (\E[\cov_i(\bundle_{t+1})]-\E[\cov_i(\bundle_{t})]) \Big) 
&\qquad \text{(since it is a telescoping sum)} \nonumber \\
& \leq \sum_{\substack{t \in  \{0,1,2,\dots,|\bidset|\}: \\ b'\leq \lambda_t < b}}  \lambda_t \cdot (\E[\cov_i(\bundle_{t+1})]-\E[\cov_i(\bundle_{t})]) \label{eq:s1}\\
& = p(b) - p(b') \quad \text{(by Equation \eqref{eq:payment_rule})} \label{eq:s2}
\end{align}
This proves inequality \eqref{eq:ml_p1}. Further, from Lines \eqref{eq:s1} and \eqref{eq:s2} we have
\begin{align*}
    p(b) - p(b') & = \sum_{\substack{t \in  \{0,1,2,\dots,|\bidset|\}: \\ b'\leq \lambda_t < b}}  \lambda_t \cdot (\E[\cov_i(\bundle_{t+1})]-\E[\cov_i(\bundle_{t})]) \\ 
    & \leq b \sum_{\substack{t \in  \{0,1,2,\dots,|\bidset|\}: \\ b'\leq \lambda_t < b}}   (\E[\cov_i(\bundle_{t+1})]-\E[\cov_i(\bundle_{t})]) \\
    & = b \cdot \Big( \E[\cov_i(\bundle)] - \E[\cov_i(\bundle')] \Big)
\end{align*}
This proves Inequality \eqref{eq:ml_p2}. This shows a payment rule that can be efficiently computed and that would lead to an incentive compatible mechanism. Note that we assume that the expected value of the allocation can be found for any bid value. Our main algorithm (Algorithm \ref{alg:lp_and_round_monotonic}) in Section \ref{sec:new_ic_theory} gives such an allocation. 

We now prove that individual rationality holds. Let $D_{\theta_i}$ be the dataset allocated when $i$ bids with value $\theta_i$ then we have 
\begin{align*}
    \theta_i \E[\cov_i(D_{\theta_i})] - p(\theta_i)  & \ge \theta_i \E[\cov_i(D_{1})] - p(\lambda_1)  \quad \quad \text{(since the mechanism is incentive compatible)} \nonumber \\
    & = \theta_i \E[\cov_i(D_{1})] - \lambda_0 \E[\cov_i(D_{1})] \quad \quad \text{(using \eqref{eq:payment_rule})} \\ 
    & =  (\theta_i - \lambda_0) \E[\cov_i(D_{1})] \\
    & \ge 0 
\end{align*}
\end{proof}

\subsection{Proof of Fact \ref{thm:lp_round_not_monotone}}\label{app:proof_lp_round_not_monotone}
\begin{proof}
This can be shown by an example. Specifically, suppose that we only have three points and two bidders ($n=3$ and $m=2$). Let the radius values be the same for all bidders and points, i.e., $r_{ij}=r, \ \forall j \in \db, \forall i \in [m]$. Further, suppose that all of the points are in each others neighborhoods, i.e., $\forall j,j' \in \db: d(j,j') \leq \rad$. Moreover, we set $w_{ij}=\frac{1}{3} \ \forall j \in \db, \forall i \in [m]$.

Suppose the second bidder always has $b_2=1$ but the first bidder has $b_1=2 $ and $b'_1 =1$. Consider the following LP optimal solution under $b'_1$:
\[
\begin{array}{c c c}
\bX'_{11} = 1 & \bX'_{12} = 0 & \bX'_{13} = 0 \\ 
\bX'_{21} = 0 & \bX'_{22} = 1 & \bX'_{23} = 1 \\ 
\end{array}
\]
Note that this leads to an optimal coverage of $1$ for both bidders in the LP solution and when rounded, i.e., $\bcov_1=\bcov_2=\E[\cov_1]=\E[\cov_2]=1$. 

Now consider the following LP optimal solution under $b_1$: 
\[
\begin{array}{c c c}
\bX_{11} = \frac{1}{2} & \bX'_{12} = \frac{1}{2} & \bX_{13} = \frac{1}{2} \\ 
\bX_{21} = \frac{1}{2} & \bX'_{22} = \frac{1}{2} & \bX_{23} = \frac{1}{2} \\ 
\end{array}
\]
Although the LP solution is optimal for both bidders, the rounded allocation is smaller in expectation. This follows since by Equation \eqref{eq:point_cov_exact_form} we have $\E[\cov_1] = \E[\cov_2] = 1-\frac{1}{2^3} < 1$.    
\end{proof}

\subsection{Proof of Theorem \ref{thm:lp_round_monotone}}\label{app:proof_lp_round_monotone}
\begin{proof}
First we start with the following claim that shows that the LP coverage values are monotonic.
\begin{proof}[Proof of Lemma \ref{lemma:lp_is_monotonic}]
This follows immediately as a special case of Lemma \ref{lemma:gen_monotonicity}. 
\end{proof}
Now we establish the validity of the probability values $\pval_i$ for each bidder $i \in [m]$.
\begin{claim}
For each bidder $i \in [m]$ we have $\pval_i \in [0,1]$.
\end{claim}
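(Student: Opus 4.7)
The plan is to split on whether $\bcov_i = 0$ or $\bcov_i > 0$ and then invoke Lemma~\ref{lem:rounding_approx} to bound the ratio $\bcov_i / \E[\cov_i]$. In the degenerate case $\bcov_i = 0$, the algorithm explicitly sets $\pval_i = 0 \in [0,1]$, so nothing more needs to be said there.

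For the non-degenerate case $\bcov_i > 0$, I would first verify that the ratio is well-defined by noting that Lemma~\ref{lem:rounding_approx} gives $\E[\cov_i] \ge (1-1/e)\,\bcov_i > 0$, so the denominator in $\pval_i = (1-1/e) \cdot \frac{\bcov_i}{\E[\cov_i]}$ is strictly positive. Non-negativity $\pval_i \ge 0$ then follows from the non-negativity of both $\bcov_i$ (a non-negative combination of the LP variables $\bC_{ij} \in [0,1]$) and $\E[\cov_i]$ (a non-negative combination of probabilities).

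For the upper bound, rearranging the inequality $\E[\cov_i] \ge (1-1/e)\,\bcov_i$ from Lemma~\ref{lem:rounding_approx} yields
\[
\frac{\bcov_i}{\E[\cov_i]} \le \frac{1}{1-1/e} = \frac{e}{e-1},
\]
and multiplying both sides by $(1-1/e)$ gives $\pval_i \le 1$, as required.

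I do not expect any real obstacle here; the claim is essentially a direct consequence of the coverage lower bound already proved in Lemma~\ref{lem:rounding_approx}. The only subtlety worth spelling out is that the case split on $\bcov_i > 0$ versus $\bcov_i = 0$ is what makes the ratio in the definition of $\pval_i$ well-defined, and that Lemma~\ref{lem:rounding_approx} simultaneously guarantees both the positivity of the denominator and the tightness needed to keep $\pval_i$ below $1$.
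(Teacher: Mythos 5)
Your proof is correct and follows essentially the same route as the paper's: a case split on whether $\bcov_i$ vanishes, followed by the bound $\E[\cov_i] \ge (1-1/e)\,\bcov_i$ from Lemma~\ref{lem:rounding_approx} to cap the ratio at $\frac{1}{1-1/e}$. Your explicit remark that this same inequality guarantees the denominator is strictly positive is a small but welcome addition in rigor over the paper's version.
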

\begin{proof}
First, if $\bcov_i=0$ then it follows that $\pval_i=0$ and is therefore in $[0,1]$. So we focus now on the case where $\bcov_i>0$, we have the following:
\begin{align}
    \pval_i & = (1-1/e) \cdot \frac{\bcov_i}{\E[\cov_i]} \\
    & \leq (1-1/e) \cdot \frac{1}{(1-1/e)}  \quad \quad \text{(By Inequality \eqref{eq:point_cov_exact_form})}  \\
    & = 1 
\end{align}
Further, it is clear that we would have $\pval_i >0$. 
\end{proof}
We now introduce the following lemma which lower bounds the final expected coverage $\E[\cov'_i]$ a bidder $i$ receives from Algorithm \ref{alg:lp_and_round_monotonic}.
\begin{proof}[Proof of Lemma \ref{lem:expected_ceverage}]
Note that when a non-empty dataset is allocated to the bidder the expected coverage will be the same as the one from Algorithm \ref{alg:lp_and_round} which is $\E[\cov_i]$. Therefore, we have 
\begin{align}
    \E[\cov'_i] & = (1-\pval_i) \cdot 0 + \pval_i \cdot \E[\cov_i] \\ 
                & = (1-1/e) \cdot \frac{\bcov_i}{\E[\cov_i]} \cdot \E[\cov_i] \\
                & = (1-1/e) \cdot \bcov_i
\end{align}
\end{proof}
Now we can prove the main claims of the theorem. We start by proving the approximation ratio. By the above lemma we immediately have $\sum_{i \in [m]} \theta_i \E[\cov'_i] = (1-1/e) \cdot \sum_{i \in [m]} \theta_i   \bcov_i = (1-1/e) \cdot \optlp(\bidprof) \ge (1-1/e) \cdot \welf^*(\bidprof)$. 

Now to prove monotonicity, note that by the Lemma each bidder receives exactly coverage of $(1-1/e) \cdot \bcov_i$. Therefore, if a bidder deviates from $b_i$ to $b_i'$ then we have $(1-1/e) \cdot \bcov_i \ge (1-1/e) \cdot \bcov_i'$ if $b_i \ge b'_i$ by Lemma \ref{lemma:lp_is_monotonic}.
\end{proof}

\section{Revenue Maximizing Auction using Virtual Welfare}\label{app:revenue}
We can convert our welfare maximizing results to a $(1-1/e)$-approximate revenue maximizing auction using the standard Myerson virtual valuation approach.
The revenue maximizing mechanism requires knowledge of the distribution of buyer types, and in this setting, this corresponds to a knowledge of the distribution of $\theta$.
We will assume that each buyer's valuation $\theta_i$ is independent of all other buyers, and that $\theta_i \sim F_i(\theta)$ where $F_i(\theta_i)$ is the cumulative distribution function, denote by $f_i(\theta_i)$ the probability density function.
Note that throughout this section, we differ from the notation in the main body of the paper in that we refer to the set of bidder types as $\Theta_i$ for each type where $\boldsymbol{\Theta} = \times_i \Theta_i$ and $\boldsymbol{\Theta}_{-i} = \times_{j\ne i} \Theta_j$.
Additionally, we refer to the ordered discrete set of types for bidder $i$ as $\{\theta_{i,1}, \theta_{i,2}, ..., \theta_{i, |\Theta|}\}$.
Finally, we denote the coverage that bidder $i$ received given bidder profile $(\theta_i, \boldsymbol{\theta}_{-i})$ as $\cov_i(\theta_{i}, \boldsymbol{\theta}_{-i}))])$.
We can then write the expected revenue from a truthful mechanism as follows:
\begin{equation}
    \E_{F({\boldsymbol{\theta}})}\left[\sum_{i \in \{m\}} p_i(\theta_i, \boldsymbol{\theta}_{-i})\right] = \sum_{i \in \{m\}} \E_{F({\boldsymbol{\theta}})}[p_i(\theta_i, \boldsymbol{\theta}_{-i})]
\end{equation}
By equation~\ref{eq:payment_rule}, 
\begin{align*}
    p(\theta_{i,t}, \boldsymbol{\theta}_{-i}) = \sum_{\substack{t \in  \{0,1,2,\dots,|\Theta_i|\}: \\ \theta_{i,t} < \theta_i}} \theta_{i,t} \cdot (\E[\cov_i(\theta_{i, t+1}, \boldsymbol{\theta}_{-i})]-\E[\cov_i(\theta_{i, t}, \boldsymbol{\theta}_{-i})])
\end{align*}
Therefore,
\begin{small}
\begin{align*}
    &\E_{F({\boldsymbol{\theta}})}[p_i(\theta_i, \boldsymbol{\theta}_{-i})] = \E_{F({\boldsymbol{\theta}})}\left[\sum_{\substack{t \in  \{0,1,2,\dots,|\Theta_i|\}: \\ \theta_{i,t} < \theta_i}} \theta_{i,t} \cdot (\E[\cov_i(\theta_{t+1}, \boldsymbol{\theta}_{-i})]-\E[\cov_i(\theta_{i,t}, \boldsymbol{\theta}_{-i}))])\right] \\
    &= \sum_{\boldsymbol{\theta} \in \boldsymbol{\Theta}} f_{-i}(\boldsymbol{\theta})\sum_{\substack{t \in  \{0,1,2,\dots,|\Theta_i|\}: \\ \theta_{i,t} < \theta_i}} \theta_{i,t} \cdot (\E[\cov_i(\theta_{t+1}, \boldsymbol{\theta}_{-i}))]-\E[\cov_i(\theta_{i,t}, \boldsymbol{\theta}_{-i})]) \\
    &= \sum_{\boldsymbol{\theta_{-i}} \in \boldsymbol{\Theta_{-i}}} f_{-i}(\boldsymbol{\theta}_{-i})\left( \sum_{\theta_i \in \Theta_i} f(\theta_i)\sum_{\substack{t \in  \{0,1,2,\dots,|\Theta_i|\}: \\ \theta_{i,t} < \theta_i}} \theta_{i,t} \cdot (\E[\cov_i(\theta_{t+1}, \boldsymbol{\theta}_{-i}))]-\E[\cov_i(\theta_{i,t}, \boldsymbol{\theta}_{-i})])\right) \\
    &= \sum_{\boldsymbol{\theta_{-i}} \in \boldsymbol{\Theta_{-i}}} f_{-i}(\boldsymbol{\theta}_{-i}) \sum_{t \in  \{0,1,2,\dots,|\Theta_i|\}} \left(\sum_{\substack{\theta_i \in \Theta_i: \\ \theta_i \ge \theta_{i,t}}} f(\theta_i) \theta_{i,t} \cdot (\E[\cov_i(\theta_{t+1}, \boldsymbol{\theta}_{-i}))]-\E[\cov_i(\theta_{i,t}, \boldsymbol{\theta}_{-i})])
    \right) \\
    &= \sum_{\boldsymbol{\theta_{-i}} \in \boldsymbol{\Theta_{-i}}} f_{-i}(\boldsymbol{\theta}_{-i}) \sum_{t \in  \{0,1,2,\dots,|\Theta_i|-1\}} \left((1 - F(\theta_{i,t})) \theta_{i,t} \cdot (\E[\cov_i(\theta_{t+1}, \boldsymbol{\theta}_{-i}))]-\E[\cov_i(\theta_{i,t}, \boldsymbol{\theta}_{-i})])
    \right) \\
    &= \sum_{\boldsymbol{\theta_{-i}} \in \boldsymbol{\Theta_{-i}}} f_{-i}(\boldsymbol{\theta}_{-i}) \left( -(1-F_i(\theta_{i,0}))\cdot \theta_{i,0} \cdot \E[\cov_i(\theta_{i,0}, \boldsymbol{\theta}_{-i})]\right) \\
    & \qquad - \sum_{t \in  \{0,1,2,\dots,|\Theta_i|-1\}}\E[\cov_i(\theta_{t+1}, \boldsymbol{\theta}_{-i})]((1 - F(\theta_{t+1})) \theta_{t+1} - (1 - F(\theta_{i,t})) \theta_{i,t})) \\
    &= \sum_{\boldsymbol{\theta_{-i}} \in \boldsymbol{\Theta_{-i}}} f_{-i}(\boldsymbol{\theta}_{-i}) \left( -(1-F_i(\theta_{i,0}))\cdot \theta_{i,0} \cdot \E[\cov_i(\theta_{i,0}, \boldsymbol{\theta}_{-i})]\right) \\
    & \qquad + \sum_{t \in  \{0,1,2,\dots,|\Theta_i|-1\}}f(\theta_{i,t})\E[\cov_i(\theta_{t+1}, \boldsymbol{\theta}_{-i})](\theta_{t+1} - (\theta_{t+1} - \theta_{i,t})\frac{1 - F(\theta_{i,t})}{f(\theta_{i,t})})) \\
\end{align*}
\end{small}
Note that the first term $-(1-F_i(\theta_{i,0}))\cdot \theta_{i,0} \cdot \E[\cov_i(\theta_{i,0}, \boldsymbol{\theta}_{-i})]$ is always negative, so we can maximize revenue by ensuring that we do not allocate to any type with the lowest possible $\theta_i$.
This will ensure that the first term is zero, and it leaves the revenue maximizing equation as:
\begin{small}
\begin{align*}
    &\sum_{i \in \{m\}} \E_{F({\boldsymbol{\theta}})}[p_i(\theta_i, \boldsymbol{\theta}_{-i})] = \\
    &\sum_{i \in \{m\}} \sum_{\boldsymbol{\theta_{-i}} \in \boldsymbol{\Theta_{-i}}} f_{-i}(\boldsymbol{\theta}_{-i})\sum_{t \in  \{0,1,2,\dots,|\Theta_i|\}}f(\theta_{i, t})\E[\cov_i(\theta_{i, t+1}, \boldsymbol{\theta}_{-i})](\theta_{i, t+1} - (\theta_{i, t+1} - \theta_{i, t})\frac{1 - F(\theta_{i, t})}{f(\theta_{i, t})})) \\
    &= \E_{\boldsymbol{\theta}} [\sum_{i \in \{m\}} (\theta_{i, t+1} - (\theta_{i, t+1} - \theta_{i, t})\frac{1 - F(\theta_{i, t})}{f(\theta_{i, t})}) \E[\cov_i(\theta_{i, t+1}, \boldsymbol{\theta}_{-i})]]
\end{align*}
\end{small}
Note that we can treat the first term, $(\theta_{i, t+1} - (\theta_{i, t+1} - \theta_{i, t})\frac{1 - F(\theta_{i, t})}{f(\theta_{i, t})})$, as a scaled type, a virtual type, denoted $\phi(\theta_{i,t+1})$.
Then if we maximize $\E_{\boldsymbol{\theta}}[\phi(\theta_{i,t+1})\E[\cov_i(\theta_{i, t+1}, \boldsymbol{\theta}_{-i})]]$ we will maximize revenue.
However, this is just a welfare maximization problem with virtual types replacing the true type.
Therefore, a monotone allocation algorithm that maximizes welfare for the virtual types, will maximize revenue.
Also, if an allocation algorithm approximately maximizes virtual welfare, then it will also approximately maximize revenue with the same approximation factor.
Note that, as is standard in revenue maximization through virtual values, if the virtual value is not monotone in the true type, we can use an ironing procedure to ensure monotonicity.
Then the monotonicity of the virtual value in the true type ensures incentive compatibility and individual rationality as in Lemma~\ref{th:gen_myerson_lemma}.

\section{$(1-1/e)$ Welfare Maximizing Algorithm for Standard Coverage Functions}\label{app:alg_standard_coverage}
First, we start by introducing the coverage functions as defined in \citet{dobzinski2006improved}. Note that for the setting,  \citet{dobzinski2006improved} algorithm requires using the demand oracle whereas \citet{dughmi2011convex} runs in expected polynomial time. We have a ground set $U$ and a collection of $L$ many subsets of $U$ $R_1,\dots,R_L$. We have $m$ many bidders (buyers). We can allocate each set in $\{R_1,\dots,R_L\}$ to only one bidder. If a buyer $i \in [m]$ receives a collection of sets $R_{i,1},\dots, R_{i,L_i}$ then his valuation would be $v_i(\{R_{i,1},\dots, R_{i,L_i}\}) = \theta_i \cdot |\cup_{\ell \in [L_i]} R_{i,\ell}|$  where $\theta_i \ge 0$ is the private type (importance) parameter. Note this generalizes the function from \citet{dobzinski2006improved}  since we associate a parameter $\theta_i$ for each bidder $i$ whereas in \citet{dobzinski2006improved} all bidders have $\theta_i=1$.

For an element $j \in U$ we define $N(j)=\{R_\ell| j \in R_\ell\}$. Now we introduce the below LP which is essentially modified from \ref{opt:lp}. 
\begin{subequations}  \label{cov_opt:ilp}  
\begin{equation} \label{cov_ilp:obj}  
\max\limits_{\bX_{i\ell},\bC_{ij}}  \quad \sum_{i \in [m]} \theta_i \Big( \sum_{j \in  U} \bC_{ij} \Big) 
\end{equation}    
\begin{equation}   \label{cov_ilp_const:integ}
\forall i\in [m], \forall j \in U, \forall \ell \in [L]: \quad 0 \leq  \bX_{i\ell} , \bC_{ij} \leq 1 
\end{equation}
\begin{equation}   \label{cov_ilp_const:to_atmost_onebidder}
\forall \ell \in [L]: \quad  \sum_{i \in [m]} \bX_{i\ell} \leq 1 
\end{equation}
\begin{equation}   \label{cov_ilp_const:coverage_at_most1}
\forall i\in [m], \forall j \in U: \quad \bC_{ij} \leq \sum_{\ell \in N(j)} \bX_{i\ell} 
\end{equation}
\end{subequations}
The same methods of Section \ref{sec:new_algorithmic} can be followed straightforwardly to show a $1-1/e$ approximation. Further, a similar data burning method from \ref{sec:new_ic_theory} can be used to guarantee monotonicity.

\section{Counterexample Demonstrating the Non-Monotonicity of the $\cg$ Algorithm}\label{app_sec:counterexample}

As introduced in Section \ref{sec:experiment}, our $\syn$ dataset has $m=2$, $n=5$, $|\Lambda|=10$, $d=10$. And the dataset $\db$ is generated uniformly within the range of $[0, 10]$ with a shape of $n \times d$. Specifically,  the randomly generated buyer types are $\theta_1 = 0.4$ and $\theta_2=0.7$. The  weights $w_{ij} = 0.2$ for all $i$ and $j$. The radius $r_{ij} = 11.50893871$ for all $j$ when $i=1$ is the first buyer; and $r_{ij} = 10.48592194$ for all $j$ when $i=2$. Finally, the randomly generated dataset $\db$ is as follows.

\[
\db =
\resizebox{0.95\textwidth}{!}{$
\begin{bmatrix}
2.60299691 & 8.70395688 & 1.85039927 & 0.19661425 & 9.53252032 & 6.80450805 & 4.86588127 & 9.6502682  & 3.93398739 & 0.79557571 \\
3.51407424 & 1.63635163 & 9.83166821 & 8.80628184 & 4.94063468 & 4.00959241 & 4.51291463 & 7.20876849 & 2.47768284 & 6.22779952 \\
1.42448816 & 2.01176282 & 0.81217729 & 9.53472295 & 0.5573827  & 5.99536483 & 7.2299763  & 9.70289719 & 8.21569457 & 5.27551067 \\
3.3147673  & 3.53982196 & 0.79030301 & 5.5591438  & 1.65794463 & 2.95237831 & 8.40608485 & 3.62562009 & 3.40813978 & 7.10401744 \\
0.31707752 & 4.72387866 & 7.31576623 & 7.73215774 & 2.41244792 & 3.06197046 & 1.36649638 & 5.9573572  & 1.62568866 & 8.02327214
\end{bmatrix}
$}
\]




\end{document}